\documentclass[twocolumn,secnumarabic,amssymb, nobibnotes, aps, pre]{revtex4-1}
 \usepackage[utf8]{inputenc}
 \usepackage{amsmath}
\usepackage{amssymb}
\usepackage{amsthm}
\usepackage{dcolumn}% Align table columns on decimal point
\usepackage{bm}% bold math
\usepackage{subeqnarray}
\usepackage{xspace}
\usepackage{graphicx}
\usepackage{psfrag}
\usepackage{color}
\usepackage{placeins} % FloatBarrier
\usepackage{empheq}

\newtheorem{remark}{Remark}
\newtheorem{proposition}{Proposition}

\setlength{\textheight}{9.5in}

\let\wh=\widehat

\newcommand{\dps}{\displaystyle}
\newcommand{\nfrac}[2]{#1/#2}
\newcommand{\texte}[1]{\hbox{\ #1\ }}
\newcommand{\R}{\ensuremath{\mathbb{R}}\xspace}
\newcommand{\T}{\ensuremath{\mathbb{T}}\xspace}
\newcommand{\N}{\ensuremath{\mathbb{N}}\xspace}

\newcommand{\Z}{\ensuremath{\mathbb{Z}}\xspace}

\newcommand{\e}{\ensuremath{\mathrm{e}}}
\newcommand{\ic}{\ensuremath{\mathrm{i}}\xspace} % le i complexe
\newcommand{\dd}{\mathrm{\ d}}
\newcommand\vect[1]{\ensuremath{\mathbf{#1}}}

\newcommand\sign{\mathrm{sign}}
\newcommand{\D}{\ensuremath{\mathcal{D}}\xspace}

\newcommand{\ddz}{\frac{\partial}{\partial t}}
\newcommand{\sfrac}[2]{{\textstyle\frac{#1}{#2}}}
\newcommand\eqdef{\stackrel{\mathrm{def}}{=}}

\newcommand\cF{{\mathcal{F}}}

\newcommand\cI{{\mathcal{I}}}
\renewcommand{\Re}{\operatorname{\rm Re}}
\renewcommand{\Im}{\operatorname{\rm Im}}
\newcommand{\Id}{\mathbb I}
\newcommand\cA{{\mathcal{A}}}

\definecolor{mred}{rgb}{0.7,0.3,0}
\usepackage[colorlinks,citecolor=mred,linkcolor=mred,
            bookmarksopen,
            bookmarksnumbered
%            hyperindex=false,
%            pdfkeywords={LaTeX,package,glossary,acronym}%
           ]{hyperref}

\bibpunct{[}{]}{;}{n}{}{}

\begin{document}

\title{Numerical simulation of Lugiato-Lefever equation for Kerr combs generation in Fabry-Perot resonators}%

\author{Mouhamad Al Sayed Ali}%
\affiliation{UNIV. RENNES, CNRS, IRMAR - UMR 6625, F-35000 Rennes, France}
\author{St\'ephane Balac}%
\email{\textcolor{black}{Corresponding author: stephane.balac@univ-rennes.fr}}
\affiliation{UNIV. RENNES, CNRS, IRMAR - UMR 6625, F-35000 Rennes, France}
\author{Germain Bourcier}%
\affiliation{LAAS-CNRS, Universit\'e de Toulouse, CNRS, UPS, Toulouse, France}
\author{Gabriel Caloz}%
\affiliation{UNIV. RENNES, CNRS, IRMAR - UMR 6625, F-35000 Rennes, France}
\author{Monique Dauge}%
\affiliation{UNIV. RENNES, CNRS, IRMAR - UMR 6625, F-35000 Rennes, France}
\author{Arnaud Fernandez}%
\affiliation{LAAS-CNRS, Universit\'e de Toulouse, CNRS, UPS, Toulouse, France}
\author{Olivier Llopis}%
\affiliation{LAAS-CNRS, Universit\'e de Toulouse, CNRS, UPS, Toulouse, France}
\author{Fabrice Mah\'e}%
\affiliation{UNIV. RENNES, CNRS, IRMAR - UMR 6625, F-35000 Rennes, France}

\begin{abstract}
 Lugiato-Lefever equation (LLE) is a  nonlinear Schr\"odinger
equation with damping, detuning and driving terms,   introduced as a model for  Kerr combs generation in ring-shape resonators and more recently, in the form of a variant, in Fabry-Perot (FP) resonators.
The aim of this paper is to present some  numerical methods  that complement each other  to solve the LLE in its general form   both in the dynamic and  in the steady state regimes. We also provide some mathematical properties of the LLE likely to help the understanding and interpretation of the numerical simulation results.
\end{abstract}
\maketitle
%====================
\section{Introduction}
%====================
Kerr frequency combs refer to a laser source whose spectrum consists of a series of discrete, equally spaced frequency lines,  which are generated in an optical cavity from a pump laser by the Kerr nonlinearity. 
Applications of Kerr frequency combs are numerous in various domains in optics such as optical metrology, lidars or time-frequency systems.
The coherent conversion of the pump laser to a frequency comb has been obtained in a variety of optical resonator such as
  whispering-gallery mode resonators  (WGM),  fiber ring resonators or
   integrated ring-resonators of high non-linearity  \cite{Chembo:16, Chembo:13, Matsko:11, Jang:15}.
 Here we are concerned with  Fabry-Perot (FP) resonators formed by an optical fiber bounded at each end  by  a multi-layer dielectric mirror.
Beyond their ease of design and manipulation \cite{Obrzud:17} FP resonators allow a durable and reproducible coupling in all-fiber optical systems,
enabling their use as a compact source of Kerr frequency combs.
They
are considered as an interesting alternative to crystalline WGM resonators  and  to integrated   optical resonators.
  
\medskip

An approach for the theoretical understanding of
 Kerr comb generation process in WGM   resonators  and  integrated ring-resonators relies on a spatiotemporal partial differential equation referred to as the Lugiato–Lefever equation (LLE), which is a nonlinear Schr\"odinger equation  with damping, detuning, and driving  terms~\cite{Lugiato:87}
 that have been initially introduced in the late 80's by L.~Lugiato and R.~Lefever as a model for spontaneous pattern formation in nonlinear optical systems.
 More recently, this model has been extended to describe  Kerr comb generation process in FP resonators in \cite{Cole:18}.
  The difference between the original LLE for ring shaped (RS) resonators and the equation obtained as a model for Kerr frequency combs generation in a   FP resonator lies on the one hand in the presence of an additional integral term that makes the equation non-local in space and on the other hand on the periodic boundary conditions  resulting from the modeling approach of the  dielectric mirrors at each end of the fiber.
  The LLE for a FP resonator  (in short the FP-LLE) can be expressed as
\begin{subequations}\label{eq:mod}
\begin{align}\label{eq:LLE}
 \frac{\partial \psi}{\partial t} (\theta,t) 
  &=  
 -\ic \frac{\beta}{2} \  \frac{\partial^2 \psi}{\partial \theta^2} (\theta,t) - (1+\ic \alpha)\,\psi(\theta,t) 
\nonumber\\
&\hspace*{-10mm} + \ic  \,   \psi(\theta,t)\, \Big(   |\psi(\theta,t)|^2 +  \frac{\sigma}{\pi} \int_{-\pi}^{\pi} |\psi(\zeta ,t)|^2 \dd \zeta 
  \Big)
+   F
\end{align}
where, in order to  cover in a unified way the two above mentioned  cases  corresponding respectively to RS resonators \cite{Godey:14} and FP resonators \cite{Cole:18},  we  have introduced a normalized space variable $\theta\in\,]-\pi,\pi[$ 
and  the parameter~$\sigma$.
In the case of a  FP resonator with a cavity made from a  fiber with length $L$, the space variable $\theta$ is related to the position $z$ along the fiber by  $\theta = \pi z / L$
whereas for a RS resonator $\theta$ refers to the polar coordinate.
Moreover,  \eqref{eq:LLE} with $\sigma=0$   coincides with the original LLE  for RS whereas for $\sigma=1$
we have the specific LLE for FP resonators.
The constant, unit-less and real valued parameters  in \eqref{eq:LLE} are as follows:
\begin{itemize}
\item[-] $F>0$ is the amplitude of the  continuous wave laser pump;
\item[-] $\beta$ is the group velocity dispersion (GVD)  parameter of the resonator; it can be either positive corresponding to the so-called  \emph{normal regime} or negative     corresponding to the \emph{anomalous regime};
\item[-] $\alpha$ is the cavity phase detuning of the laser pump.
\end{itemize}

\medskip

Equation  \eqref{eq:LLE} is to be considered together with
the following periodic boundary conditions:
\begin{align}\label{eq:BC}
 \psi(-\pi,t) &= \psi(\pi,t) 
\\
  \frac{\partial\psi}{\partial\theta} (-\pi,t) &=   \frac{\partial\psi}{\partial\theta} (\pi,t)  .
\end{align}
Furthermore, an initial condition  is also required:
\begin{equation}\label{eq:IC}
\forall \theta \in[-\pi,\pi]\qquad \psi(\theta,t=0)=\psi_0(\theta)
\end{equation}
\end{subequations}
where $\psi_0$ is a given function that describes the state of the cavity at initial time $t=0$.
Throughout the paper, we will denote by $\| \  \|$ the $\mathbb{L}^2$ norm (a.k.a. energy norm) over $[-\pi,\pi]$ defined  as
\begin{equation}\label{eq:1117}
\|u\| = \Big( \int_{-\pi}^{\pi} |u(\zeta)|^2\dd\zeta \Big)^\frac{1}{2} .
\end{equation}

\medskip

A comprehensive analysis of the way the FP-LLE  can be justified from Maxwell equations
 is conducted in~\cite{Balac:22} and relies on different assumptions
 including  the low transmission approximation, the high-Q limit regime approximation
 and the assumption that  the electric field injected in the FP cavity is   polarized in a direction \vect{e_x}  transverse to the propagation direction $\vect{e_z}$ (see also \cite{Ziani:23}).
 It is established  in~\cite{Balac:22} that,
up to a transformation introduced to handle simultaneously the forward and backward waves through a symmetrization on $[-L,L]$, the FP-LLE solution $\psi$ is related
 to the slowly varying envelop of the  forward $E_F$ and backward $E_B$ electric field wave propagating at a velocity $v_g$ 
 through the relations
 \begin{subequations}\label{eq:1131}
 \begin{align}
E_F(z,t) &= \sum_{n\in\Z} g_n(t) \exp\Big( \ic \frac{\pi n}{L} \big(  v_g t-z) \Big) 
\\
E_B(z,t) &= \sum_{n\in\Z} g_n(t) \exp\Big( \ic \frac{\pi n}{L} \big(  v_g t+z) \Big) 
\end{align}
\end{subequations}
where  the $g_n(t), n\in\Z$, are the Fourier coefficients of~$\psi$ with respect to the space variable.
The electric field component of light-wave in the FP-cavity is expressed  $\forall z\in[0,L], \forall t\in\R$ as
\begin{equation*}
 \vect{ E}(z, t) = \Big( E_F(z,t) \, \e^{\ic \gamma z } +  E_B(z,t) \, \e^{-\ic \gamma z} \Big)\,\e^{-\ic\omega_0 t}\ \vect{e_x} + \mathrm{c.c.}
\end{equation*}
where
$\omega_0$ is the so-called \emph{carrier frequency} of the electric field, $\gamma$ is the mode propagation constant 
and \emph{c.c.} means ``complex conjugate''.

\medskip

Let us clarify the  link between  steady state solutions to the FP-LLE problem~\eqref{eq:mod} and
Kerr frequency combs. (A more physical description in the case of WGM disk resonator can be found  e.g. in \cite{Godey:14}.) 
In the frequency domain, we have $\forall z\in[0,L], \forall \omega\in\R^+$ 
\begin{equation*}
 \vect{\wh E}(z, \omega) = \Big( \wh E_F(z,\omega_0 - \omega) \, \e^{\ic \gamma z } +  \wh E_B(z,\omega_0 - \omega) \, \e^{-\ic \gamma z} \Big)\,\vect{e_x} + \mathrm{c.c.} 
\end{equation*}
where $ \vect{\wh E}(z, \cdot) $ refers to the Fourier transform of $ \vect{E}(z, \cdot) $.
For  steady state solutions  $\psi$  to  FP-LLE, the  sequence $(g_n)_{ n\in\Z}$  is independent of time and it
follows from~\eqref{eq:1131}  that
\begin{align*}
\wh  E_F(z,\omega_0-\omega)   &=  \sum_{n\in\Z} g_n\e^{-\ic \pi n z/L}   \, \delta(\omega_0-\omega+ \pi n \tfrac{v_g}{L} )\\
\wh  
E_B(z,\omega_0-\omega) &= \sum_{n\in\Z}g_n \e^{\ic \pi n z/L} \,  \delta(\omega_0-\omega+ \pi n \tfrac{v_g}{L} ) 
\end{align*}
where $\delta$ is the Dirac delta function.
Thus, in the frequency domain, the electric field related to steady-state solutions $\psi$ to the FP-LLE  at the cavity  end located at $z=L$  is a ``comb'' with equidistant frequencies $\omega$ at a distance $\frac{\pi v_g}{L}$ from each other and centered around the carrier frequency~$\omega_0$.

\medskip

The goals of this paper are to give  an overview of the mathematical  properties of the solutions to the FP-LLE problem in connection with Kerr frequency combs generation and to propose some numerical methods enabling the investigation by numerical simulation of some  features of these solutions,  both in the time-dynamic  regime and in the  steady state regime.
The study of FP-LLE solutions is not a simple problem and may conceal some pitfalls. This is reflected in the numerical simulations in various ways.
\medskip

The paper is organized as follows.
Section  \ref{sec:2} is devoted to some mathematical properties of solutions   to the FP-LLE.
A particular attention is paid to
steady state solutions that are the solutions of interest when studying Kerr frequency combs.
In Section    \ref{sec:3} we  provide some numerical methods for solving the FP-LLE that each gives a different viewpoint when exploring the properties of FP-LLE  solutions.
Finally, in   Section~\ref{sec:4} we  show some results obtained by numerical simulation that illustrate the wealth of situations related to  the FP-LLE.
The paper is completed by two appendices.
Although there is no energy conservation due to the non Hamiltonian structure of the LLE, in Appendix~\ref{app:1124}
an explicit bound of the energy is given.
In  Appendix~\ref{app:1125}, explicit necessary conditions to have bifurcation points from constant solutions are given for both the FP-LLE and its discretization by finite differences.

%==========================================================================
\section{Overview of mathematical properties of the LLE model}\label{sec:2}
%==========================================================================

%= = = = = = = = = = = = = = =
\subsection{Properties of the time-dynamic FP-LLE problem}\label{sec:21}
%= = = = = = = = = = = = = = =
Let us first observe that a solution $\psi$ to  the FP-LLE  problem  \eqref{eq:mod}  can be viewed as a $2\pi$-periodic function defined on the real line~$\R$.
It follows that the function $\theta\in\R \mapsto \psi(\theta - c)$ for any $c\in\R$  provides a solution to   problem  \eqref{eq:mod}.
Thus, periodic boundary conditions imply a translation invariance of solutions. 
Moreover, we can also   point out that periodic solutions to  \eqref{eq:LLE} over \R  exhibit some dilation property with respect to~$\beta$.
Namely, let $\psi$ be a $2\pi$-periodic solution to~\eqref{eq:LLE}  and let $\varphi : \theta \in\R \mapsto \psi(p\theta)$ for a given $p\in\N^*$.
One can easily deduce from~\eqref{eq:LLE}   that the \hbox{$\frac{2\pi}{p}$-periodic} function~$\varphi$  is such that
\begin{align*}
 \frac{\partial \varphi}{\partial t} (\theta,t) 
  &=  
 -\ic \frac{\beta}{2p^2} \  \frac{\partial^2 \varphi}{\partial \theta^2} (\theta,t) - (1+\ic \alpha)\,\varphi(\theta,t) 
\nonumber\\
&\hspace*{-10mm} + \ic  \,   \varphi(\theta,t)\, \Big(   |\varphi(\theta,t)|^2 +  p\,\frac{\sigma}{\pi} \int_{-\pi/p}^{\pi/p} |\varphi(\zeta ,t)|^2 \dd \zeta 
  \Big)
+   F .
\end{align*}
and since $\varphi$ is $\frac{2\pi}{p}$-periodic, we have
\begin{equation*}
 p\, \int_{-\pi/p}^{\pi/p} |\varphi(\zeta ,t)|^2 \dd \zeta 
 = \int_{-\pi}^{\pi} |\varphi(\zeta ,t)|^2 \dd \zeta  .
\end{equation*}
It follows that if  we have a $2\pi$-periodic solution $\psi$ to the FP-LLE for a given value of $\beta$, then  it generates thanks to a dilation transformation with factor $p$, a $\nfrac{2\pi}{p}$-periodic  solution to the FP-LLE for a value of the parameter $\beta$ divided by $p^2$.

\medskip

Let us introduce the  space of $2\pi$-periodic functions {$V=\{v\in\mathsf{H}^1_\mathrm{loc}(\R)\ ;\ v(x-\pi)=v(x+\pi)\,\forall x\in\R\}$}
where $\mathsf{H}^1_\mathrm{loc}(\R)$ denotes the Sobolev space of functions that are square-integrable, as well as their first derivative, over every bounded interval.
In the case  $\sigma=0$, we find in \cite{Ghidaglia:88}
that for every $\psi_0\in V$
the Cauchy problem~\eqref{eq:mod} possesses a unique solution $\psi\in\mathsf{L}^\infty(\R^+,V)$
and that for every $t\in\R^+$  the mapping $\psi_0 \mapsto \psi(\cdot,t)$ is continuous on $V$.
Moreover,  for every $t\in\R^+$, we have the energy estimate
\begin{equation}\label{eq:1124}
\|\psi(\cdot,t)\|^2\leq   \e^{-t} \, \|\psi_0\|^2+  2\pi F^2\,(1-\e^{-t})  .
\end{equation}
The full framework of  \cite{Ghidaglia:88} does not apply to the case $\sigma=1$ due to the  non-linear integral term.
Nevertheless, by the same chain of arguments as in \cite{Ghidaglia:88}, one can show that the energy estimate \eqref{eq:1124} holds as well when $\sigma=1$, see Appendix~\ref{app:1124}.

%= = = = = = = = = = = = = = =
\subsection{Steady state solutions}\label{sec:2108}
%= = = = = = = = = = = = = = =
As mentioned in the introduction, Kerr frequency  combs are related to steady state solutions to the LLE problem~\eqref{eq:mod}.
Such time independent solutions satisfy the following ordinary differential equation on $]-\pi,\pi[$ deduced from  \eqref{eq:LLE} 
\begin{subequations}\label{eq:mods}
\begin{align}\label{eq:LLEs}
 -\ic \frac{\beta}{2}\, \psi''(\theta) 
 &-   (1+\ic \alpha )\,\psi(\theta)  
 \nonumber\\
& +  \ic\  \psi(\theta)\, \big(  |\psi(\theta)|^2 +\frac {\sigma}{\pi}\,\|\psi\|^2 \big)
+ F =0
\end{align}
together with the following  boundary conditions:
\begin{equation}\label{eq:bcs}
\psi(-\pi) = \psi(\pi)\qquad\texte{and}\qquad \psi'(-\pi) = \psi'(\pi) .
\end{equation}
\end{subequations}

The mathematical study of the LLE \eqref{eq:mod} when $\sigma=0$ 
has been the subject of several publications in the last decade.
To our knowledge, the  first mathematical investigation of  the RS-LLE  equation from a viewpoint of bifurcation theory from trivial constant solutions is  \cite{Miyaji:10}. 
A fairly   complete numerical investigation of the various steady-state solutions to the RS-LLE equation has been undertaken in \cite{Godey:14,Godey:17}  where a stability analysis is developed to
obtain  spatial bifurcation maps.
Existence and  stability of periodic solutions to the RS-LLE problem has been investigated in~\cite{ Delcey:18} using a center manifold reduction theory.
This approach provides a detailed  behavior of periodic solutions in a neighborhood
of specific parameter values, but does not allow a global description of solutions when the RS-LLE parameters change.
In  \cite{Mandel:17}  and \cite{Dauge:24} a
 bifurcation analysis for RS-LLE equation
  based on the bifurcation theorems of Crandall and Rabinowitz~\cite{Crandall:71} is conducted.
This approach  provides a different and more global  viewpoint than the one obtained by  the center manifold reduction approach.
 Furthermore, in \cite{Mandel:17} the authors  provide numerical computations of
bifurcation diagrams in various emblematic cases which are very inspiring.

\medskip

\begin{remark}\label{rem:1853}
Note that in the steady state regime, 
given a solution~$\psi$ to the LLE equation \eqref{eq:LLEs} for $\sigma=1$,
the quantity $\alpha_{\psi}=\frac{1}{\pi}\|\psi\|^2$ is a constant and therefore equation~\eqref{eq:LLEs} can be recast in the following way
\begin{equation*}%\label{eq:LLEs2}
 -\ic \frac{\beta}{2}\, \psi''(\theta) 
 -   (1+\ic (\alpha-\alpha_\psi) )\,\psi(\theta) +  \ic\  \psi(\theta)\, |\psi(\theta)|^2 
+ F =0 .
\end{equation*}
The LLE for FP resonators ($\sigma=1$) has the same expression as the LLE for a RS resonator ($\sigma=0$)
with a modified  cavity phase detuning parameter.
\end{remark}
From Remark~\ref{rem:1853} we can expect at first sight to have similar mathematical properties for solutions to the FP-LLE as the ones we have for the RS-LLE that have been extensively studied in the literature.
However, we can not directly deduce the features of solutions to  FP-LLE from known results to the RS-LLE since the change in the detuning parameter depends on the $\mathbb{L}^2$ norm of the solution and therefore is unknown and  solution dependent.
What we  can claim is that each solution $\psi$ to the FP-LLE  for a given  cavity phase detuning parameter   $\alpha$  is also a solution  to the RS-LLE for a  different   detuning parameter given by  $\alpha- \frac{1}{\pi}\|\psi\|^2$.

\medskip

Note  that for a given set of parameters $(\beta,\alpha, F)$,  one can  exhibit up to  three constant solutions (among others steady state solutions). These solutions are often referred to as \emph{flat solutions}, see Section \ref{sec:flat} below. As a consequence, it is clear that  the steady state problem  \eqref{eq:mods}  does not have a unique solution.
\medskip

Finally, let us mention that   problem  \eqref{eq:mods} written for a complex valued unknown $\psi$
can be  set in an equivalent real valued form 
 by considering separately  real and imaginary parts.
 Namely, let $u_1$ denotes the real part of~$\psi$ and $u_2$  its imaginary part; We deduce from~\eqref{eq:LLEs} the following two differential equations satisfied by $(u_1,u_2)$
\begin{subequations}\label{eq:mods2}
\begin{align}
-\frac{\beta}{2} \, u_1'' &- \alpha u_1-  u_2 
\nonumber\\
&+ u_1\,\big(u_1^2+u_2^2+  \frac{\sigma}{\pi}  \, \mathcal{I}(u_1,u_2) \big)  = 0 \label{eq:mods2a}  \\
-\frac{\beta}{2} \, u_2'' &+  u_1-\alpha\, u_2 
\nonumber\\
&+ u_2\,\big(u_1^2+u_2^2 + \frac{\sigma}{\pi} \, \mathcal{I}(u_1,u_2) \big) - F = 0 \label{eq:mods2b} 
\end{align}
where
\begin{equation}
\mathcal{I}(u_1,u_2) = \int_{-\pi}^{\pi} (u_1^2(\zeta) + u_2^2(\zeta) ) \dd\zeta
= \|\psi\|^2 .
\end{equation}
From \eqref{eq:bcs}  we obtain the following periodic boundary conditions for $u_k, k=1,2$
\begin{equation}\label{eq:bcs2}
u_k(-\pi) = u_k(\pi)\qquad\texte{and}\qquad u_k'(-\pi) = u_k'(\pi) .
\end{equation}
\end{subequations}

%= = = = = = = = = = = = = = =
\subsection{Flat solutions}\label{sec:flat} 
%= = = = = = = = = = = = = = =

A special kind of steady state solutions to the  FP-LLE~\eqref{eq:LLE} are  
solutions that depend neither on time nor on position.
These trivial solutions  are called \emph{flat} solutions and they  can be explicitly calculated as detailed below.
From a practical viewpoint, their interest in the study of steady state solutions to the LLE is related to the practical observation, see Section~\ref{sec:4}, that 
stable spatially periodic solutions bifurcate from this set of solutions when the LLE parameters $\alpha$ or $F$ vary. 
Moreover,  these flat solutions behave as attractors when using numerical methods such as the Split-Step method or a Collocation method: Namely,
a special attention has to be paid to the choice of the computational setting in order to get a solution other than a flat solution.

For a flat solution $\psi_\bullet$, the FP-LLE \eqref{eq:LLE}  reads
\begin{equation}\label{eq:LLEf}
   (1+\ic \alpha)\,\psi_\bullet  - \ic  \,   \psi_\bullet\, \Big(   |\psi_\bullet|^2 + 2\sigma |\psi_\bullet|^2  \Big) = F .
\end{equation}
By taking  into account the product of   \eqref{eq:LLEf} by its conjugate, we deduce that
\begin{equation}\label{eq:1518}
\big( 1 + (\alpha - (1+2\sigma)\rho_\bullet)^2\big)\,\rho_\bullet  = F^2
\end{equation}
where we have set
\begin{equation}\label{eq:1520}
\rho_\bullet = |\psi_\bullet|^2 .
\end{equation}
One may  wonder how many flat solutions   the FP-LLE~\eqref{eq:LLE} holds.
To answer the question we must determine how many  \emph{positive} solutions $\rho_\bullet$ equation \eqref{eq:1518} owns for a couple of parameters $(\alpha, F)$ fixed.
Since $\rho_\bullet>0$, we are interested in the positive roots of the  polynomial $P\in\R[X]$ where
 \begin{equation}\label{eq:1638}
 P \eqdef
(1+2\sigma)^2\, X^3 -2\alpha(1+2\sigma)\, X^2+ (1+\alpha^2) X - F^2 .
\end{equation}
As well known,  $P$ may have    one real root or three real roots (counted with multiplicity); However,   the number of positive roots may  \emph{a priori}  vary from zero to three. 
Note that once $\rho_\bullet$ is known as one of the root of $P$,
one can easily deduce the corresponding flat solution $\psi_\bullet$.
From equation~\eqref{eq:LLEf}, we have
 \begin{equation}
\psi_\bullet = \frac{F}{1+\ic (\alpha-(1+2\sigma)\rho_\bullet)}. 
\end{equation}

\medskip

To investigate existence of positive roots to  $P$,  let $G_\alpha : \R\rightarrow \R$ be the mapping defined for a fixed value of $\alpha$ considered as a parameter by
\begin{align}
G_\alpha(\rho) &=  \big( 1 + (\alpha - (1+2\sigma)\rho)^2\big)\,\rho .
\end{align}
 $G_\alpha$ is a degree 3 polynomial mapping with $G_\alpha(0)=0$ and 
$\dps\lim_{\rho\rightarrow\pm\infty} G_\alpha(\rho) = \pm \infty$.
Its derivative is given by
\begin{equation}\label{eq:1536}
G_\alpha'(\rho) =  3(1+2\sigma)^2\, \rho^2 -4\alpha(1+2\sigma)\, \rho+ (1+\alpha^2)
\end{equation}
and the discriminant of the binomial equation $G'_\alpha(\rho) = 0$  is
$
\Delta = 4(1+2\sigma)^2\, (\alpha^2-3) .
$
Thus, equation $G_\alpha'(\rho)=0$ has two conjugate complex roots when $\alpha^2<3$, one  real root when $\alpha^2=3$ and two real roots     when $\alpha^2>3$.
It follows that $G_\alpha$ is a strictly increasing function over \R when $\alpha^2\leq 3$.
Since $G_\alpha(0)=0$, it defines a one-to-one mapping from $[0,+\infty[$ onto $[0,+\infty[$  and we can conclude that when $\alpha^2\leq 3$ equation \eqref{eq:1518} has a unique positive solution.

The case when $\alpha^2>3$ is less straightforward.
Solving equation $G_\alpha'(\rho) =0$, we find that $G_\alpha$ has two local extrema located at
\begin{equation}\label{eq:1556}
\rho_\pm = \frac{2\alpha \pm \sqrt{\alpha^2-3}}{3(1+2\sigma)} .
\end{equation}
Moreover, one can easily check that when $\alpha$ is positive, the two  local extrema  abscissas $\rho_\pm$
 are positive with \hbox{$0< \rho_-< \rho_+$} and that when $\alpha$ is negative
 $\rho_-< \rho_+ < 0$.
Thus, when  $\alpha$ is negative, $G_\alpha$ is strictly increasing over $[0,+\infty[$.
Since $G_\alpha(0)=0$, it defines a one-to-one mapping from $[0,+\infty[$ onto $[0,+\infty[$  and we  conclude that when $\alpha< -\sqrt {3}$ equation~\eqref{eq:1518} has a unique positive solution.

Finally, it remains to consider the case when $\alpha>\sqrt{3}$ where $G_\alpha$ increases from $0$ to $\rho_-$, then decreases from $\rho_-$ to $\rho_+$ and finally increases from   $\rho_+$  to infinity.
We have
\begin{align*}
G_\alpha(\rho_+) &= \frac{2\,\alpha+\sqrt {{\alpha}^{2}-3}}{27 (1+2\sigma) } 
 \left( 9+ \left( \alpha-\sqrt {{\alpha}^{2}-3} \right) ^{2}
 \right) \\
 G_\alpha(\rho_-) &=\frac { 2\,\alpha-\sqrt {{\alpha}^{2}-3}}{ 27(1+2\sigma) }
 \left(9+ \left( \alpha+\sqrt {{\alpha}^{2}-3} \right) ^{2}
 \right)
\end{align*}
and for $\alpha>\sqrt{3}$ these two quantities are positives.
It follows that
\begin{itemize}
\item[-] when $F^2 \in] G_\alpha(\rho_+), G_\alpha(\rho_-)[$, equation~\eqref{eq:1518} has a three positive solutions;
\item[-] when $F^2 = G_\alpha(\rho_+)$ or  when $F^2 =  G_\alpha(\rho_-)$, equation~\eqref{eq:1518} has a two positive solutions;
\item[-] when $F^2 < G_\alpha(\rho_+)$ or when $F^2 >  G_\alpha(\rho_-)$, equation~\eqref{eq:1518} has a  unique positive solution.
\end{itemize}

Note that when there exist three positive solutions $\rho_1, \rho_2, \rho_3$  to  equation~\eqref{eq:1518}, they can be arranged in increasing order as follows
\begin{equation}
0 < \rho_1 < \rho_- < \rho_2 < \rho_+ < \rho_3 .
\end{equation}

We have depicted in Fig.~\ref{fig:1639} the plane $(\alpha,F^2)$ and we have indicated the different areas corresponding to the number of solutions to equation~\eqref{eq:1518} when $\sigma=1$.
One can note that  flat solutions do not depend on the value of the parameter $\beta$.
We have also depicted in Fig.~\ref{fig:1640}  the variation of $\rho_\bullet = |\psi_\bullet|^2$ as a function of $F$ for $\alpha\in\{1,5,10,15\}$.
For a fixed value of $F$ and $\alpha\in\{5,10,15\}$, one can see that one may have $1$, $2$ or $3$ corresponding values for $\rho_\bullet$ (intersection of the curves with the vertical line
passing through $(F,0)$ whereas for $\alpha=1$, we have only one values for $\rho_\bullet$ whatever is the value of $F$.
A similar observation can be done  in Fig.~\ref{fig:1641} where we have depicted    the variation of $\rho_\bullet = |\psi_\bullet|^2$ as a function of $\alpha$ for  various values of $F$.

\begin{figure}[h]
\begin{center}
\includegraphics[width=\linewidth]{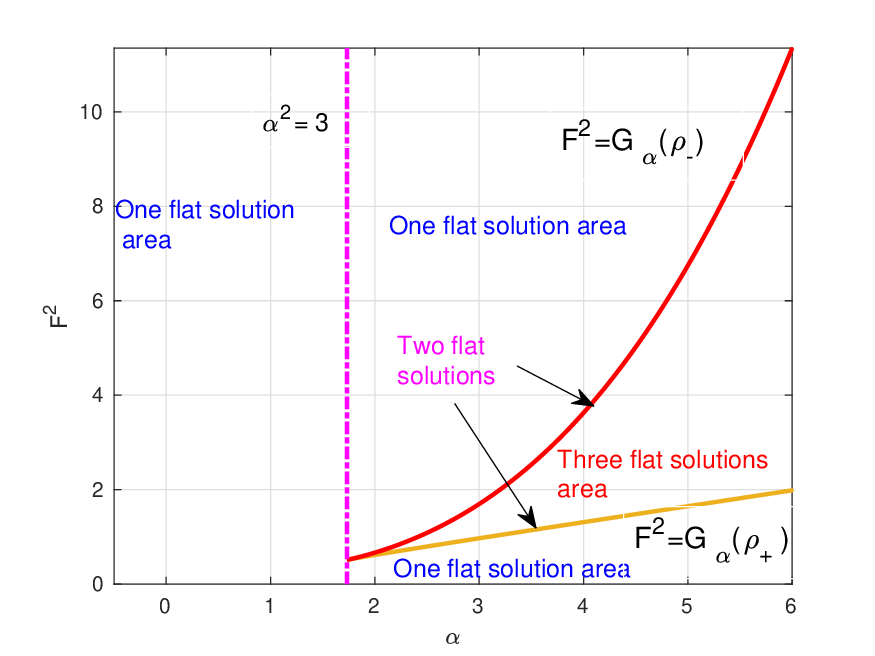}
\caption{Number of flat solutions to the LLE  depending on the values of the parameters $(\alpha, F^2)$ for $\sigma=1$.}\label{fig:1639}
\end{center}
\end{figure}

\begin{figure}[h]
\begin{center}
\includegraphics[width=0.75\linewidth]{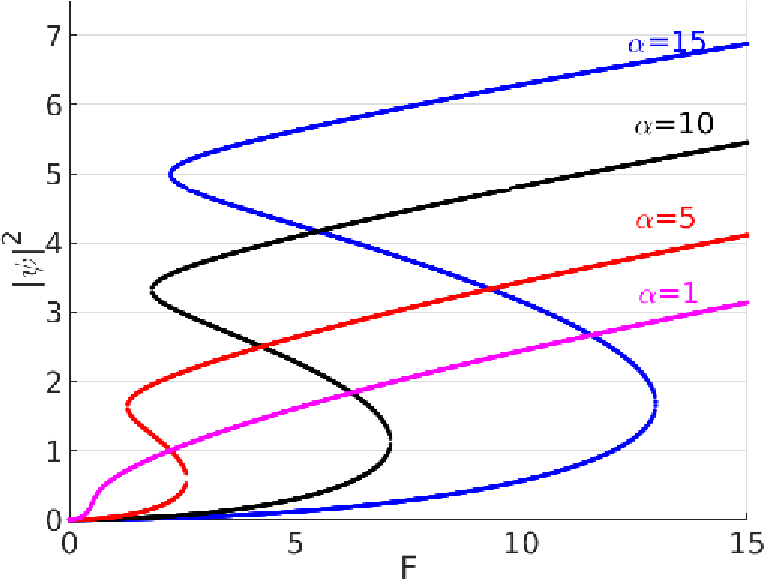}
\caption{Variation of $\rho_\bullet = |\psi_\bullet|^2$ as a function of $F$ for $\alpha\in\{1,5,10,15\}$ and $\sigma=1$.}\label{fig:1640}
\end{center}
\end{figure}

\begin{figure}[h]
\begin{center}
\includegraphics[width=0.75\linewidth]{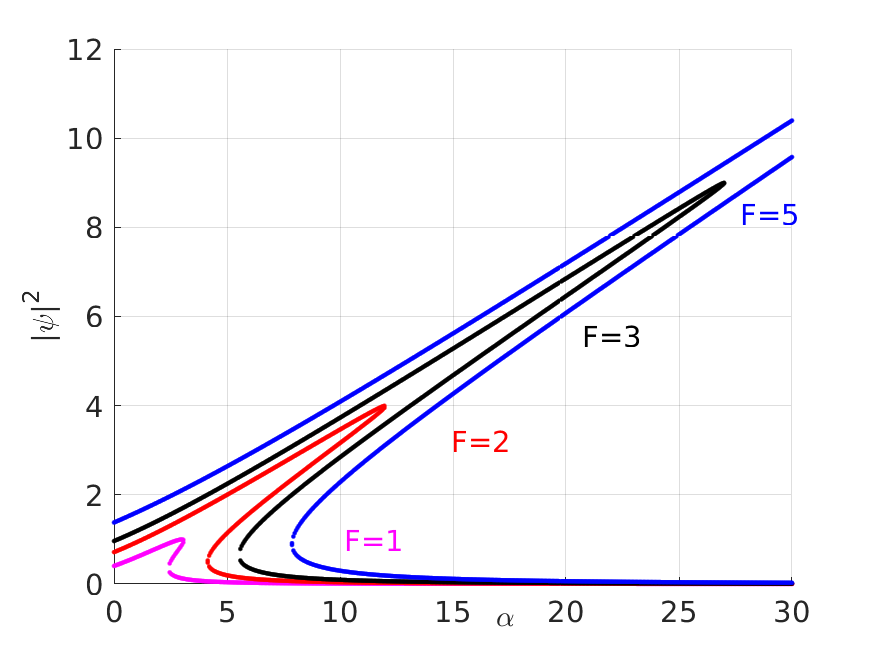}
\caption{Variation of $\rho_\bullet = |\psi_\bullet|^2$ as a function of $\alpha$ for $F\in\{1,2,3,5\}$ and $\sigma=1$.}\label{fig:1641}
\end{center}
\end{figure}

\medskip

As a last remark, let us mention that when the non-linear part of the LLE is handled through  a linearization process, one can easily show that the resulting linear ODE problem has a \emph{unique} solution that is a flat solution.
This     helps understanding the special role of flat solutions for the LLE.

%= = = = = = = = = = = = = = =
\subsection{Properties of FP-LLE steady state solutions}
%= = = = = = = = = = = = = = =
We have  several noticeable results for solutions to the steady state FP-LLE~\eqref{eq:mods} considered here in its equivalent form~\eqref{eq:mods2}
that extend the results of \cite{Mandel:17} for RS-LLE.

\begin{proposition}\label{prop:2218}
Any solution $\psi = u_1+\ic u_2$  to the  steady state FP-LLE~\eqref{eq:mods} 
satisfies the energy estimate
\begin{equation*}
 \|\psi\|\leq  \sqrt{2\pi} {F} .
\end{equation*}
Moreover, $ \int_{-\pi}^{\pi}  u_1( \theta)\dd\theta >0$.
\end{proposition}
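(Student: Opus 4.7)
The plan is to multiply the steady state equation \eqref{eq:LLEs} by $\overline{\psi}$, integrate over $[-\pi,\pi]$, and extract information from the real part. The main observation driving the proof is that three of the four terms in \eqref{eq:LLEs} contribute only to the imaginary part once multiplied by $\overline{\psi}$ and integrated, so the real part reduces to a strikingly simple identity.

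First I would compute each integral. Integration by parts using the periodic boundary conditions \eqref{eq:bcs} gives
\begin{equation*}
\int_{-\pi}^{\pi} \psi''(\theta)\,\overline{\psi(\theta)}\,\dd\theta = -\int_{-\pi}^{\pi}|\psi'(\theta)|^2\,\dd\theta,
\end{equation*}
which is real, so $-\ic\tfrac{\beta}{2}\int \psi''\overline{\psi}\,\dd\theta$ is purely imaginary. The detuning term yields $-(1+\ic\alpha)\|\psi\|^2$, whose real part is $-\|\psi\|^2$. The cubic and nonlocal terms combine into
\begin{equation*}
\ic\!\int_{-\pi}^{\pi}\!|\psi|^4\,\dd\theta + \ic\tfrac{\sigma}{\pi}\|\psi\|^4,
\end{equation*}
again purely imaginary. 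Finally, $F\int_{-\pi}^{\pi}\overline{\psi}\,\dd\theta$ has real part $F\int_{-\pi}^{\pi} u_1\,\dd\theta$. Taking real parts in the resulting identity gives the key relation
\begin{equation*}
\|\psi\|^2 = F\int_{-\pi}^{\pi} u_1(\theta)\,\dd\theta.
\end{equation*}

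From this identity both conclusions follow. For the energy bound, Cauchy--Schwarz yields $\int_{-\pi}^{\pi} u_1\,\dd\theta \leq \sqrt{2\pi}\,\|u_1\| \leq \sqrt{2\pi}\,\|\psi\|$, so $\|\psi\|^2 \leq F\sqrt{2\pi}\,\|\psi\|$, i.e.\ $\|\psi\| \leq \sqrt{2\pi}\,F$. For the positivity of $\int u_1$, note that since $F>0$ the zero function is not a solution of \eqref{eq:LLEs}, hence $\|\psi\|^2>0$, and the identity then forces $\int_{-\pi}^{\pi} u_1\,\dd\theta >0$.

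There is essentially no obstacle here: the only subtlety is checking that the nonlocal term remains imaginary after pairing with $\overline{\psi}$, which is immediate because $\|\psi\|^2$ is a real constant factor and $\int|\psi|^2\,\dd\theta$ is real. The argument is identical for $\sigma=0$ and $\sigma=1$, and does not use the sign of $\beta$, consistent with the fact that this proposition extends the corresponding RS-LLE result of \cite{Mandel:17}.
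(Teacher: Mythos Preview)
Your proof is correct and is essentially the same argument as the paper's. The paper invokes the time-dependent energy estimate \eqref{eq:1124} and the identity \eqref{eq:1232} from Appendix~\ref{app:1124}, but that appendix derives \eqref{eq:1232} by exactly the computation you carry out (multiply by $\overline{\psi}$, integrate, take real parts); you simply apply it directly to the steady-state equation rather than passing through the time-dependent problem, and you use Cauchy--Schwarz in the form $\int u_1\leq\sqrt{2\pi}\,\|\psi\|$ rather than Young's inequality followed by a Gr\"onwall argument, which is if anything slightly cleaner.
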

\begin{proof}
This is  a direct consequence of the energy estimate~\eqref{eq:1124} for  steady state solutions to the FP-LLE.
The statement $ \int_{-\pi}^{\pi}  u_1( \theta)\dd\theta >0$ is a direct consequence of  relation
\eqref{eq:1232} in Appendix~\ref{app:1124}.
\end{proof}

\begin{proposition}\label{prop:2219}
Any solution $\psi$
 to the  steady state LLE~\eqref{eq:mods} satisfies
\begin{equation*}
\|\psi\|_{\infty} \eqdef
\sup_{\theta\in\T} |\psi(\theta)|\leq   F  + 
 \frac{24\,\pi^{2}}{|\beta|}\,  F^3 .
\end{equation*}
\end{proposition}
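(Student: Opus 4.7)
The plan is to derive a pointwise algebraic bound from the equation, and then close the argument using Sobolev embedding together with the energy estimate of Proposition~\ref{prop:2218}.

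First, I would rearrange the steady-state equation \eqref{eq:LLEs} by collecting all the pure $\psi$ terms on the left. This yields the pointwise identity
\[
\psi(\theta)\,\bigl[1+i(\alpha-|\psi(\theta)|^2-\tfrac{\sigma}{\pi}\|\psi\|^2)\bigr] \;=\; F-i\,\tfrac{\beta}{2}\psi''(\theta), \qquad \theta\in[-\pi,\pi].
\]
Since the bracket on the left has real part equal to $1$, its modulus is at least $1$. Taking moduli on both sides and applying the triangle inequality on the right yields the crucial pointwise bound
\[
|\psi(\theta)| \;\leq\; \bigl|F-i\tfrac{\beta}{2}\psi''(\theta)\bigr| \;\leq\; F+\tfrac{|\beta|}{2}|\psi''(\theta)|,
\]
and passing to the supremum one obtains $\|\psi\|_\infty \leq F+\tfrac{|\beta|}{2}\|\psi''\|_\infty$, so the problem is reduced to an $L^\infty$ estimate $\|\psi''\|_\infty\leq\tfrac{48\pi^2}{\beta^2}F^3$ that must be independent of $\alpha$ and $\sigma$.

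For this second step I would combine the 1D Sobolev embedding for $2\pi$-periodic functions
\[
\|u\|_\infty^2 \;\leq\; \tfrac{1}{2\pi}\|u\|^2 + 2\|u\|\,\|u'\|,
\]
applied to $\psi''$, together with the energy estimate $\|\psi\|\leq\sqrt{2\pi}F$ of Proposition~\ref{prop:2218} and with $L^2$ bounds on $\psi'$, $\psi''$, $\psi'''$ obtained by multiplying~\eqref{eq:LLEs} (and its first derivative) by $\bar\psi$, $\bar\psi''$, $\bar\psi'''$ and integrating by parts on $[-\pi,\pi]$. Each such energy identity naturally produces the combination $\alpha-|\psi|^2-\tfrac{\sigma}{\pi}\|\psi\|^2$; it is eliminated by substituting the algebraic identity above, which replaces it by the controlled quantity $F-i\tfrac{\beta}{2}\psi''$.

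The main obstacle is precisely this closure: standard energy estimates on the ODE give bounds that scale with $\sqrt{1+\alpha^2}$, and an $\alpha$-independent estimate emerges only after the algebraic identity $\psi[1+i(\cdots)]=F-i\tfrac{\beta}{2}\psi''$ is used systematically to absorb every occurrence of the detuning. The numerical constant $24\pi^2$ in the final bound reflects the product of Sobolev/Poincaré constants on $[-\pi,\pi]$ (in particular the factor $2\pi$ of the period) and the combinatorial constants from the integration-by-parts identities.
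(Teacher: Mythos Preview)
Your first step---the pointwise algebraic bound $|\psi(\theta)|\le F+\tfrac{|\beta|}{2}|\psi''(\theta)|$ coming from the fact that the bracket $1+i(\alpha-|\psi|^2-\tfrac{\sigma}{\pi}\|\psi\|^2)$ has modulus at least $1$---is correct and is indeed the heart of the matter. From there, however, the paper does \emph{not} attempt the bootstrap you outline. Its proof is a two-line reduction: the estimate is already established in \cite[Section~3]{Mandel:17} for the RS case $\sigma=0$, and crucially that estimate does not involve $\alpha$; by Remark~\ref{rem:1853} any steady-state FP solution is an RS solution for the shifted detuning $\alpha-\tfrac{1}{\pi}\|\psi\|^2$, so the same bound carries over verbatim to $\sigma=1$.

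What you propose is essentially to rederive the Mandel--Reichel $L^\infty$ estimate from scratch, directly for general $\sigma$. That is a legitimate route, and your identification of the obstacle ($\alpha$-independence) is on target. But your description of the cure is imprecise: you cannot literally ``substitute'' the pointwise identity $\psi[1+i(\cdots)]=F-i\tfrac{\beta}{2}\psi''$ into an integrated energy identity, because the combination $\alpha-|\psi|^2-\tfrac{\sigma}{\pi}\|\psi\|^2$ appears there multiplied by $|\psi'|^2$ or $|\psi''|^2$, not by $\psi$ itself. The actual mechanism in \cite{Mandel:17} is a careful choice of real/imaginary parts of several test-function pairings so that the detuning and the cubic term cancel simultaneously, together with an iteration that closes only after combining several such identities. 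Your sketch gestures at this but does not show the cancellation, and it is not obvious that the constants you would obtain assemble to exactly $24\pi^2$. In short: the strategy is viable but the hard analytic work is hidden in the phrase ``substituting the algebraic identity,'' whereas the paper sidesteps all of it by invoking the $\alpha$-independence of the known RS bound and the shift of Remark~\ref{rem:1853}.
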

\begin{proof}
For the RS-LLE ($\sigma=0$), this estimate is proved in
\cite[Section~3]{Mandel:17}.
It does not rely on the detuning parameter.
So, it remains valid for any value of this parameter.
As a consequence, the estimate is also true for the FP-LLE  ($\sigma=1$),  see Remark~\ref{rem:1853}.
\end{proof}

An important feature of  the   steady state LLE~\eqref{eq:mods} is that for $\beta$ and $F$ fixed, outside a given range of values for $\alpha$,  the only solutions are   flat solutions as stated in Proposition~\ref{prop:1443} below.

\begin{proposition}\label{prop:1443}
There exist $\alpha_\sigma^\star > 0$ and $\alpha_{\sigma \star} < 0$, only dependent on the parameters $F$ and $\beta$, such that if $\sign(-\beta)\ \alpha \notin  [\alpha_{\sigma \star},\alpha_{\sigma}^{\star}]$ then  any steady state solution to the LLE  \eqref{eq:mods} 
is a flat solution.
\end{proposition}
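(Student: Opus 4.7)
The plan is to reduce the FP-LLE case ($\sigma=1$) to the already-treated RS-LLE case ($\sigma=0$). For the RS-LLE the analogue of the present proposition is established in~\cite{Mandel:17}: there exist $\tilde\alpha^\star>0$ and $\tilde\alpha_\star<0$, depending only on $F$ and~$\beta$, such that any steady-state RS-LLE solution whose detuning $\tilde\alpha$ satisfies $\sign(-\beta)\tilde\alpha\notin[\tilde\alpha_\star,\tilde\alpha^\star]$ must be constant in~$\theta$. I invoke this as a black box; everything that follows is interval arithmetic.

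The bridge between the two cases is supplied by Remark~\ref{rem:1853} together with Proposition~\ref{prop:2218}. By Remark~\ref{rem:1853}, any steady-state solution $\psi$ of the FP-LLE with detuning $\alpha$ is also a steady-state solution of the RS-LLE with the (solution-dependent) shifted detuning $\tilde\alpha=\alpha-\tfrac{1}{\pi}\|\psi\|^2$. By Proposition~\ref{prop:2218}, the shift is controlled uniformly in $\psi$:
\begin{equation*}
 0\le \tfrac{1}{\pi}\|\psi\|^2 \le 2F^2,
\end{equation*}
so that $|\sign(-\beta)(\tilde\alpha-\alpha)|\le 2F^2$, regardless of the sign of~$\beta$. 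I then set
\begin{equation*}
 \alpha_\sigma^\star := \tilde\alpha^\star + 2F^2,\qquad \alpha_{\sigma\star}:= \tilde\alpha_\star - 2F^2,
\end{equation*}
which depend only on $F$ and~$\beta$ and satisfy $\alpha_\sigma^\star>0>\alpha_{\sigma\star}$. A one-line case analysis shows that if $\sign(-\beta)\alpha>\alpha_\sigma^\star$, then $\sign(-\beta)\tilde\alpha\ge \sign(-\beta)\alpha-2F^2>\tilde\alpha^\star$, and symmetrically if $\sign(-\beta)\alpha<\alpha_{\sigma\star}$ then $\sign(-\beta)\tilde\alpha<\tilde\alpha_\star$. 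In either case $\sign(-\beta)\tilde\alpha\notin[\tilde\alpha_\star,\tilde\alpha^\star]$, and the RS-LLE rigidity result applied to~$\psi$ forces $\psi$ to be constant, i.e.\ a flat solution of the FP-LLE.

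The only delicate point is that $\tilde\alpha$ itself depends on the unknown~$\psi$, so the reduction is not literally a reparametrization of $(F,\beta,\alpha)$-space but is carried out solution by solution; the $\alpha$-independent $\mathbb{L}^2$ bound of Proposition~\ref{prop:2218} is precisely what makes this bookkeeping harmless. Consequently the only genuine mathematical obstacle lies in the cited RS-LLE statement from~\cite{Mandel:17}, not in the extension to the FP-LLE performed here.
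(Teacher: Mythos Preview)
Your proof is correct and follows essentially the same route as the paper: reduce to the RS-LLE result of \cite{Mandel:17} via Remark~\ref{rem:1853}, then use the energy bound of Proposition~\ref{prop:2218} to control the solution-dependent detuning shift $\frac{1}{\pi}\|\psi\|^2\le 2F^2$. The only cosmetic difference is that the paper enlarges the RS interval $[\alpha_{0\star},\alpha_0^\star]$ by $2F^2$ on just one side (the side dictated by the sign of $\beta$, since the shift is nonnegative), whereas you enlarge symmetrically on both sides, yielding a slightly wider but still valid interval.
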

\begin{proof}
For the RS-LLE ($\sigma=0$), this result is obtained directly from  \cite[Theorem 1.2]{Mandel:17}.
We can deduce the property for the steady state solutions $\psi$ to the FP-LLE from the RS-LLE case as follows.
As stated in Remark~\ref{rem:1853}, the FP-LLE can be recast into a RS-LLE with  a detuning parameter
$\alpha_{RS} = \alpha - \frac{1}{\pi}\|\psi\|^2$ and therefore we can state that 
if $\sign(-\beta)\,\big(\alpha - \frac{1}{\pi}\|\psi\|^2 \big) 
\notin  [\alpha_{0 \star},\alpha_0^\star]$ then $\psi$
is a flat solution.
Thanks to the energy estimate of Proposition~\ref{prop:2218}, we deduce that
\begin{itemize}
\item[-] if $\beta<0$ and for $\alpha< \alpha_{0\star}$ or $\alpha>\alpha_0^\star+2F^2$ 
\item[-] if $\beta>0$ and for $\alpha< -\alpha_{0 \star}$ or $\alpha> 2F^2-\alpha_0^\star$  
\end{itemize}
any stationary solution to the FP-LLE 
is a flat solution.
We can merge the two cases as : if $\sign(-\beta)\ \alpha \notin  [\alpha_ {1 \star},\alpha^\star_1]$ then  any stationary solution to the FP-LLE is a flat solution,
where we have set
$\alpha_ {1 \star} = \alpha_{0\star}$ and $\alpha_1^\star = \alpha_0^\star+2F^2$ if $\beta<0$
and  
$\alpha_ {1 \star} = \alpha_{0 \star} - 2F^2$ and $\alpha_1^\star = \alpha_0^\star$ if $\beta>0$.
\end{proof}

Note that explicit bounds for quantities  $\alpha_{0 \star}$ and $\alpha_0^\star$ can be found in  \cite{Mandel:17}.

%====================
\section{Some numerical methods for solving   FP-LLE}\label{sec:3}
%====================

In this section, we describe three complementary  numerical methods.
We start by presenting   a Split-Step method to solve the dynamic FP-LLE  \eqref{eq:mod},
then we present  a Collocation method aimed at solving the steady state FP-LLE~\eqref{eq:mods}.
Finally, in order to uncover in a more systematic way branches of non constant steady state solutions, we present a continuation method  according to the parameter $F$ or $\alpha$.

%= = = = = = = = = = = = = = = = = = = = = = = =
\subsection{Symmetric Split-Step  method}\label{sec:S3F4LLE}
%= = = = = = = = = = = = = = = = = = = = = = = =
An elementary idea   to investigate Kerr frequency combs  consists in solving the time-dynamic FP-LLE~\eqref{eq:mod}
and observing the solution on a long time scale.
Here we choose the Split-Step method.
The principle of  the  Split-Step method is to introduce a subdivision of the time interval and over each sub-interval to solve in a prescribed order the linear and the non-linear parts of the FP-LLE~\eqref{eq:LLE}, each of the resulting sub-problem being simpler to solve.
Thus, for $t>0$ fixed, let us consider the linear operator
\begin{equation*}
%  \label{eq:2211}
  \mathcal D : \psi(\cdot,t)  \longmapsto
 -\ic \frac{\beta}{2} \  \frac{\partial^2 \psi}{\partial \theta^2} (\cdot,t) - (1+\ic \alpha)\,\psi(\cdot,t)   +   F
\end{equation*}
 and the non-linear operator
\begin{equation*} 
% \label{eq:2212}
  \mathcal N : \psi(\cdot,t)  \longmapsto \ic    \,   \psi(\cdot,t)\, \Big(   |\psi(\cdot,t)|^2 +   \frac{\sigma}{\pi} \int_{-\pi}^{\pi} |\psi(\zeta ,t)|^2 \dd \zeta   \Big) 
\end{equation*}
so that with these notations, the FP-LLE~\eqref{eq:LLE} reads
$
     \ddz \psi(\cdot,t) =  \mathcal D\, \psi(\cdot,t) + \mathcal N(\psi(\cdot,t))   .
$
Moreover,
let us consider a subdivision  $(t_k)_{k \in\{ 0, \ldots,K\}}$  of 
the time interval  $[0,T]$ and let $t_{k+\frac{1}{2}} = t_k + \frac{h_k}{2}$
where $h_k = t_{k+1}-t_k$ is  the current  step-size.
The Symmetric Split-Step method consists in solving over each sub-interval $[t_k,t_{k+1}]$, the following three nested problems
with space variable $\theta$ as a parameter:
\begin{subequations}
\begin{equation}
  \label{eq:9418k}
 \left\{
   \begin{aligned}
      \ddz u_k(t) &=  \mathcal N(u_k(t)) \qquad \forall t\in[t_k,t_{k+\frac{1}{2}}] \\
 u_k(t_k) &= \psi_{k} 
   \end{aligned}\right.
\end{equation}
where for $k=0$, $\psi_0$ is the initial data and for $k\geq 1$,
$\psi_{k}$ represents the approximated solution at grid point $t_k$ computed at the previous step;
\begin{equation}
  \label{eq:9419k}
  \left\{
   \begin{aligned}
      \ddz v_k(t) &=    \D \, v_k(t) \qquad \forall t\in[t_k,t_{k+1}] \\
 v_k(t_k) &= u_k(t_{k+\frac{1}{2}})  
   \end{aligned}\right.
\end{equation}
where $u_k(t_{k+\frac{1}{2}})$ represents the solution to problem \eqref{eq:9418k}
at half grid point $t_{k+\frac{1}{2}}$;
\begin{equation}
  \label{eq:9420k}
 \left\{
   \begin{aligned}
      \ddz w_k(t) &=    \mathcal N(w_k(t))  \qquad \forall t\in[t_{k+\frac{1}{2}},t_{k+1}] \\
 w_k(t_{k+\frac{1}{2}}) &= v_k(t_{k+1}) 
   \end{aligned}\right.
\end{equation}
\end{subequations}
where $v_k(t_{k+1})$ represents the solution to problem~\eqref{eq:9419k}
at node $t_{k+1}$.
An approximated solution to the FP-LLE~\eqref{eq:LLE} at grid node~$t_{k+1}$ is then given by
\[
\psi(t_{k+1})\approx   w_k(t_{k+1}) \eqdef \psi_{k+1} .
\]
The efficiency of the Symmetric Split-Step method to solve the FP-LLE  \eqref{eq:mod}
relies on the fact that both the linear and non-linear problems in the above splitting can be solved easily. 
Namely,  on the one hand, the  solution to problem~\eqref{eq:9419k} can be computed by using the Fourier approach and we get the following explicit formula
\begin{equation}
v_k(t_{k+1}) =\frac{ \e^{h_k(1+\ic\alpha)} -1 }{1+\ic\alpha} \, F+ \sum_{n\in\Z} \mu_n \, \e^{d_n h_k}\ \e^{\ic n\theta}  
\end{equation}
where the $\mu_n$ are the Fourier coefficients of the $2\pi$-periodic function
$u_k(t_{k+\frac{1}{2}})$ 
and  $d_n = \ic\, \frac{\beta}{2} n^2 - (1+\ic\alpha)$.
On the other hand, 
the solution to problem~\eqref{eq:9418k}   can be computed analytically
from  the following integral representation form:
\begin{equation}\label{eq:0954}
u_k(t) = \psi_k\ \exp\left(   \ic \int_{t_k}^t  |u_k(\tau)|^2
+ \sfrac{\sigma}{\pi}  \|u_k(\tau)\|^2_0
 \dd\tau \right) .
\end{equation}
By multiplying each side of the ODE in \eqref{eq:9418k} by $\overline{u_k}(t)$
and adding it to the complex conjugate equation deduced from~\eqref{eq:9418k}, which has been previously multiplied by ${u_k}(t)$, we can
show that $ \frac{\partial}{\partial t} |u_k(t)|^2   = 0$.
As a consequence $|u_k(t)|^2$ does not depend on $t$
and therefore we have the following explicit expression for the solution deduced from \eqref{eq:0954}:
\begin{equation*}
u_k(t) = \psi_k\ \exp\Big(  \ic\, (t-t_k) \big( |\psi_k|^2
+ \sfrac{\sigma}{\pi}  \|\psi_k\|^2_0\big)\Big) .
\end{equation*}
A similar expression holds for the solution to \eqref{eq:9420k} with $\psi_k$ changed for $v_k(t_{k+1}) $.

We have developed a free open-source program under \textsc{Matlab} that solves the  time-dynamic FP-LLE~\eqref{eq:mod} by the Split-Step method described here, see \cite{Balac:22a}.

The main drawbacks of this numerical approach is that depending on the initial  data $\psi_0$  in \eqref{eq:IC} it
may or may not exist a steady state solution  to problem \eqref{eq:mod}  and even when the solution is stationary for the choice made for $\psi_0$,
it is not obvious  to be sure that a steady state has been reached at the end of the simulation.
Moreover, as usually for such an explicit numerical scheme for a wave-type equation a numerical stability condition (CFL  like condition)  that links the time step-size to the space step-size exists and imposes to carefully chose the time step-size according to the space step-size \cite{Jahnke:17}.
However, this simulation approach is very efficient in terms of computational cost and our working experience has shown that with  appropriately manufactured initial data $\psi_0$, it is possible to obtain a large variety of steady state solutions to the FP-LLE problem.

%= = = = = = = = = = = = = = = = = = = = = = = = = = =
\subsection{A Collocation method}\label{sec:1640}
%= = = = = = = = = = = = = = = = = = = = = = = = = = =

Another way to compute an approximate solution to the steady-state FP-LLE~\eqref{eq:mods} is to use a collocation approach where basically a subdivision of the interval $[-\pi,\pi]$ is introduced leading to a non-linear system  of algebraic equations resulting from the boundary conditions and the collocation conditions imposed at the internal nodes of the subdivision. 
The collocation conditions can be obtained in various way from  ODE  \eqref{eq:LLEs}.
An  elementary way  is to use a finite difference approximation  at the nodes of the
 subdivision $(\theta_n)_{n=0,\ldots,N}$
with constant step-size $h = 2\pi/N$ such that $\theta_n= -\pi + n\,h $.
At each node $\theta_n$ of the subdivision, we  approximate the second order derivative $u_k''(\theta_n)$, $k=1,2$, in  \eqref{eq:mods2a} -- \eqref{eq:mods2b}    by a centered finite difference formula:
\begin{equation}
u_k''(\theta_n) \approx \frac{u_{k}(\theta_{n+1})-2u_{k}(\theta_{n})+u_{k}(\theta_{n-1})}{h^2}
\end{equation}
As well known this approximation formula is second order accurate with respect to the step-size~$h$.
Moreover, we can approximate the integral $\mathcal{I}(u_1,u_2)$ from the values of  $u_k$ at the subdivision nodes by using the trapezoidal rule that reads, due to  the periodic boundary conditions,
\begin{equation}
\mathcal{I}(u_1,u_2) \approx 
 h \sum_{n=1}^{N} \Big( u_1^2(\theta_n)+u_2^2(\theta_n) \Big) .
\end{equation}
It is well known that the trapezoidal quadrature formula  has super-convergence properties when applied to the computation of the integral of a smooth periodic function over a period \cite{Trefethen:14}.

Denoting by $u_{k,n}$, $n=0,\ldots, N$, the approximate value of $u_k(\theta_n)$,
we are led to the set of non-linear equations
for $ n=1,\ldots, N $
{\small
\begin{subequations}\label{eq:2202}
\begin{align}
&
-\frac{1}{2}\beta \,  \frac{u_{1,n+1}-2u_{1,n}+u_{1,n-1}}{h^2}- \alpha u_{1,n}- u_{2,n} 
\nonumber\\
&+ u_{1,n}\,\big(u_{1,n}^2+u_{2,n}^2\big)
+\frac{\sigma\, h}{\pi} \,  u_{1,n} \sum_{p=1}^{N} \big( u_{1,p}^2+u_{2,p}^2 \big)
=0
 \label{eq:2202a}
 \\
&-\frac{1}{2}\beta \,  \frac{u_{2,n+1}-2u_{2,n}+u_{2,n-1}}{h^2} - \alpha\, u_{2,n} +u_{1,n}  \nonumber\\
&
+ u_{2,n}\,\big(u_{1,n}^2+u_{2,n}^2 \big) 
+\frac{\sigma\, h}{\pi} \,  u_{2,n} \sum_{p=1}^{N} \big( u_{1,p}^2+u_{2,p}^2 \big)
 = F
 \label{eq:2202b}
 \end{align}
\end{subequations}
}
Additionally,  we have the relation
\begin{equation}\label{eq:0858}
u_{k,N} = u_{k,0}\qquad\forall k=1,2
\end{equation}
deduced from the  periodic boundary condition  \eqref{eq:bcs2}.
Denoting by $U = \big( u_{1,1},\ldots, u_{1,N}, u_{2,1},\ldots, u_{2,N}\big)^\top \in\R^{2N}$ the vector of unknowns,  we deduce from \eqref{eq:2202}--\eqref{eq:0858}
 the following non-linear system of $2N$ equations
\begin{equation}\label{eq:1218}
\mathrm{M} \, U + \mathcal{N}(U) = 0
\end{equation}
where $\mathrm{M} \in\mathcal{M}_{2N}(\R)$ is the block matrix given by
\begin{equation}\label{eq:1010}
\mathrm{M}  = \begin{pmatrix}
\frac{\beta}{2h^2} \,\mathrm{A} - \alpha\,\mathrm{I}_N & -\mathrm{I}_N \\[2mm]
 \mathrm{I}_N & \frac{\beta}{2h^2} \,\mathrm{A} -  \alpha\,\mathrm{I}_N
\end{pmatrix}
\end{equation}
 where $\mathrm{I}_N$ refers to the identity matrix in  $\mathcal{M}_{N}(\R)$
 and $\mathrm{A}$ is the matrix in  $\mathcal{M}_{N}(\R)$ defined as
 \begin{equation}\label{eq:1025}
\mathrm{A}  = \begin{pmatrix}
2 & -1 & 0 & \cdots & 0 & -1\\
-1 & 2 & -1 &0&\cdots&0\\
0&\ddots& \ddots & \ddots&&\vdots\\
\vdots&&\ddots& \ddots &\ddots&0 \\
0&&&-1 & 2 & -1 \\
-1 & 0 &\cdots &0 &-1 & 2 
\end{pmatrix} .
\end{equation}
The non-linear term is defined as
{\small
\begin{equation}
\mathcal N(U) = 
\begin{pmatrix}
u_{1,1}\,\big(u_{1,1}^2+u_{2,1}^2  \big) \\
\vdots \\
u_{1,N}\,\big(u_{1,N}^2+u_{2,N}^2 \big) \\
u_{2,1}\,\big(u_{1,1}^2+u_{2,1}^2  \big) \\
\vdots \\
u_{2,N}\,\big(u_{1,N}^2+u_{2,N}^2 \big) \\
\end{pmatrix}
+ \frac{\sigma\, h}{\pi} \|U\|^2  \ U 
- \begin{pmatrix}
0 \\ \vdots \\ 0 \\ F \\ \vdots \\ F  
\end{pmatrix}
\end{equation}
}%
where $\|U\|^2 = U^\top \ U$.
Finally, we obtain the approximate solution to the LLE-FP \eqref{eq:mods2} by solving the non-linear system \eqref{eq:1218} by a Newton solver. 

\medskip

The difficulty in using a Collocation method for solving the   FP-LLE~\eqref{eq:mods}
comes from the non-uniqueness of its solutions and the high dependence on the initial guess 
 of Newton iterative algorithm on  the convergence properties  and on the computed solution.
 In practice, our working experience shows that it is not at all easy to find an initial guess leading to the computation of a steady state solution different from a flat solution. Nevertheless, the method can be very efficient when  
  some prior information is available to design  a suitable initial guess.
 \medskip
 
 Note that these difficulties are not specific to the Collocation method presented here.
We have also developed  a free open-source  \textsc{Matlab} toolbox that uses
  the alternative Collocation method implemented in \textsc{Matlab} solver \texttt{bvp4c}, see \cite{Balac:23}. This Collocation method  is based on   a three-stage Lobatto  formula \cite{Shampine:01} and  the same phenomena were observed.

%= = = = = = = = = = = = = = = = = = = = = = = =
\subsection{Pseudo-arclength continuation method}
%= = = = = = = = = = = = = = = = = = = = = = = =

The above mentioned numerical issues and the wish to have a global picture of steady state solutions
 led us as in \cite{Mandel:17} to investigate Kerr frequency combs generation through   a pseudo-arclength continuation method \cite{Allgower}.
In the sequel we will denote by $\lambda$ one of the two parameters $F$ or $\alpha$ chosen as continuation parameter.
For ease of exposition,
we consider the discrete version of the steady state FP-LLE obtained in Section~\ref{sec:1640}  expressed in the form of the non-linear system 
$ \mathrm{M}\,U +  \mathcal N(U) =0$,
see \eqref {eq:1218}.
Note that a   development similar to the one presented  here can be done   with  equations  \eqref{eq:mods2a} -- \eqref{eq:mods2b} discretized by the Finite Element Method.

\medskip

We introduce  the mapping 
\begin{equation}\label{eq:1540}
G_h : (U,\lambda) \in \R^{2N} \times \R \longmapsto \mathrm{M}\,U +  \mathcal N(U) \in \R^{2N} 
\end{equation}
so that  solving the non-linear system   \eqref {eq:1218}
amounts to look for $(U,\lambda)$ such that $G_h (U,\lambda) =0$.
As stated before and illustrated in Section~\ref{sec:4},   solving the non-linear system   \eqref {eq:1218} by a Newton like solver is very dependent on the initial guess used for Newton iterations.
To overcome this difficulty, one can use a continuation method.
The idea  
behind the
\emph{natural parameter continuation}  method
 is to start from
a known solution  $(U^{[0]}, \lambda^{[0]}) $
and to compute a solution to the non-linear  system   \eqref {eq:1218} for a parameter value $\lambda^{[1]} = \lambda^{[0]} + \delta\lambda$ for a small increment $ \delta\lambda$ 
by Newton method with initial guess  $U^{[0]}$.
The solution  $U^{[1]}$ provided by Newton method for the parameter value $\lambda^{[1]}$ is then used as an initial guess for Newton method applied to the solving of  system   \eqref {eq:1218}  for the new  parameter value $\lambda^{[2]} = \lambda^{[1]} + \delta\lambda$.
And this process is continued by incrementing the value of $\lambda$  step by step. 
Such an approach is justified by the \emph{Implicit Function Theorem} (IFT).
According to the IFT  if $(U^{[0]}, \lambda^{[0]})$ is a solution to
the non-linear system   \eqref {eq:1218} and $\partial_U G_h(U^{[0]}, \lambda^{[0]})$ is not singular, then 
there exists a unique mapping $g$ from  $] \lambda^{[0]}-\varepsilon, \lambda^{[0]}+\varepsilon[$ onto  the open ball $B(U^{[0]},r)$   such that
the set of solutions to  equation $G_h(U, \lambda) = 0$  near $(U^{[0]}, \lambda^{[0]})$  is a $\mathcal{C}^1$ curve given by 
$\{ (\lambda, g(\lambda)) \ ;\ \lambda \in ] \lambda^{[0]}-\varepsilon, \lambda^{[0]}+\varepsilon[ \}$
with $U^{[0]} = g(\lambda^{[0]})$.

\medskip

We denote by $\partial_U G_h$ the Jacobian of $G_h$ with derivatives taken  with respect to the variable $U$ only.
A  simple  calculation   shows  that 
{\small
\begin{equation}\label{eq:1639}
\partial_U {G_h}(U,\lambda)= \mathrm{M} + \mathrm{J}(U)  + \frac{\sigma\, h}{\pi} \,\big( \big( U^\top U \big)\ \mathrm{I}_{2N} + 2 \, U\, U^\top\big)
\end{equation} 
}
where $\mathrm{J}(U) $ is the  symmetric block-matrix
\begin{equation*}
\mathrm{J}(U)  = \begin{pmatrix}
J_{11} & J_{12} \\
J_{12} & J_{22}
\end{pmatrix}
\in\mathcal{M}_{2N}(\R)
\end{equation*}
and $J_{11}$, $J_{21}$,  $J_{22}$ are three diagonal matrices in $\mathcal{M}_{N}(\R)$
with entries
\begin{itemize}
\item $3u_{1,n}^2 +  u_{2,n}^2$, \ $n=1,\ldots, N$ for $J_{11}$
\item $ 2 u_{1,n}u_{2,n}$, \ $n=1,\ldots, N$ for $J_{12}$
\item $u_{1,n}^2 + 3 u_{2,n}^2$, \ $n=1,\ldots, N$ for $J_{22}$.
\end{itemize}

\medskip

For the study of the FP-LLE, the idea is to consider for $(U^{[0]}, \lambda^{[0]})$ a flat solution since they are explicitly known, see Section~\ref{sec:flat}.
By the continuation method, we can construct from step to step the curve of flat solutions when the parameter $\lambda$ increases. This has no interest (since flat solutions are explicitly known and do not require a numerical computation) until we reach a point $(U,\lambda)$ where  the assumptions of  the IFT are violated, in
particular when $\partial_U {G_h}(U, \lambda)  $ is singular.
In such a \emph{singular point}, another branch of solutions can bifurcate from the curve of flat solutions and we can use the same continuation method to follow step by step this new curve of steady state solutions.

\medskip

The singular points along the curve of flat solutions can be calculated explicitly both when considering the steady-state FP-LLE~\eqref{eq:mods}  and the discretized problem  $G_h (U,\lambda) =0$ \eqref{eq:1540}, as detailed in Appendix~\ref{app:1125}.
A necessary condition for a bifurcation to non constant steady state solutions,  is given for FP-LLE~\eqref{eq:mods} by the dispersion relation
\begin{equation}\label{eq:3717}
\exists k\in\N^*\quad 
\frac{1}{\beta} \,\Big(
    \alpha - 2\rho_\bullet(\sigma +1)  \pm \sqrt{\rho_\bullet^2-1} \Big) 
=   \frac{k^2}{2}  .
\end{equation}
In that case,  bifurcation points (BP) are such that \hbox{$\rho_\bullet\geq1$.}
If we choose $F$ as continuation parameter for a fixed value of $\alpha$, we deduce from \eqref{eq:3717} that BP on the flat solutions curve are given by $\rho_\bullet$   root of the  polynomial  of  degree 2
\begin{equation*}
(4(\sigma+1)^2-1)\, X^2 - 4(\sigma+1)\,\big(\alpha-\tfrac{1}{2}\beta k^2\big)\,X + \big(\alpha-\tfrac{1}{2}\beta k^2\big)^2+1 .
\end{equation*}
Let $\Delta_\textrm{BP}(k) = \big(\alpha-\tfrac{1}{2}\beta k^2\big)^2 - (4(\sigma+1)^2-1) $.
When $\Delta_\textrm{BP}(k) \geq 0$, the BP are given by
\begin{equation}\label{eq:1426}
\rho_\bullet = \frac{2(\sigma+1)\big(\alpha-\tfrac{1}{2}\beta k^2\big)\pm\sqrt{\Delta_\textrm{BP}(k) }}{  (4(\sigma+1)^2-1) } .
\end{equation}
Thus,  BP along the  flat solutions curve can be obtained by computing the values $\rho_\bullet$ given by \eqref{eq:1426} larger than $1$ when the integer $k$ varies in $\N^*$
and by using the relation
\begin{equation*}
F = \sqrt{\big( 1 + (\alpha - (1+2\sigma)\rho_\bullet)^2\big)\,\rho_\bullet}
\end{equation*}
deduced from \eqref{eq:1518}.
Moreover, pursuing the study on the condition on $k$ for which $\rho_\bullet \geq 1$, one can show that when $\beta < 0$, we have an infinite number of $BP$  whereas on the contrary when  $\beta>  0$, we have an finite number of~$BP$.

 When the continuation parameter is $\alpha$ and $F$ is fixed,  calculations of the BP  are a little bit more tricky since in  equation \eqref{eq:3717}, $\alpha$ is a function of $\rho_\bullet$ given  by (see  \eqref{eq:1518})
 \begin{equation}\label{eq:1509}
\alpha = (1+2\sigma)\,\rho_\bullet \pm \sqrt{\nfrac{F^2}{\rho_\bullet}-1}
\end{equation}
In particular, BP 
occur only for $F^2\geq \rho_\bullet  \geq 1$.
By the change of variable $\rho_\bullet = \frac{F^2}{1+r^2}$, we find that $|r|$ is a root of the  polynomial  of  degree 4
\begin{align}\label{eq:1657}
P_\pm = 4 X^4 &\pm 4\beta k^3\, X^3 + (\beta^2k^4+8)\,X^2\pm (4\beta k^2+8F^2)\,X \nonumber\\
&+ \beta^2k^4+ 4\beta k^2F^2+4
\end{align}
where $P_+$ is related to the $+$ case in relation \eqref{eq:1509} and  $P_-$ is related to the $-$ case.
One can see that $P_+(-X) = P_-(X)$ so that  the roots of $P_-$ are the opposite of the roots of $P_+$. 
Moreover,  studying conditions at which the solution $\rho_\bullet$ to  \eqref{eq:3717}--\eqref{eq:1509} belongs to $[1,F^2]$
makes it possible to show that whatever is the sign of $\beta$, the number of BP along the curve of flat solutions when $\alpha$ is the continuation parameter  is finite.

 \medskip
 
The continuation method outlined here has been implemented to deal with a wide variety of equations in the \textsc{Matlab} Toolbox  \texttt{pde2path} \cite{Uecker:21}
and we have used it to  compute the branches of steady state  solutions to the FP-LLE
that bifurcate from the curve of flat solutions when either $F$ or $\alpha$ 
is used as a  bifurcation parameter. Illustrations are provided in Section~\ref{sec:4}.

%=========================================================
\section{Numerical illustrations}\label{sec:4}
%========================================================

To illustrate various properties of the numerical methods, we now present numerical simulations.
We  focus the presentation on the FP-LLE for the parameter values $\beta=-0.2$ and $F=1.6$ 
in order
to compare  with results obtained in \cite{Mandel:17} for the RS-LLE.
Similar comments to those made with this particular setting could be made with other parameter values. Additional numerical experiments can be undertaken 
 thanks to the open-sources codes  \cite{Balac:22a, Balac:23, Uecker:21}.
%= = = = = = = = = = = = = = = = = = = = = = = = = = =
\subsection{Bifurcation diagram for $\beta=-0.2$ and $F=1.6$}
%= = = = = = = = = = = = = = = = = = = = = = = = = = =
As an example illustrating the continuation method, using \texttt{pde2path} \cite{Uecker:21},  we consider the case of the FP-LLE where $\beta=-0.2$, $F=1.6$ and $\alpha$ is the  bifurcation parameter.
Note that the choice of the laser pump phase detuning  parameter $\alpha$  as a continuation parameter matches well with the experimental use of the FP cavity to generate Kerr frequency combs by tuning  step-by-step this parameter.
The number of BP along the curve of constant solutions is finite and equal to~14.
The BP computed by looking for the roots of the polynomial defined in \eqref{eq:1657} and by using relation~\eqref{eq:1509} are displayed in Table \ref{table:1659} with the corresponding value of $k$ in equation \eqref{eq:1657}  (or equivalently in equation~\eqref{eq:3717}).
These BP and   the branches that bifurcate from the curve of flat solutions at BP are
depicted in Fig.~\ref{fig:1705} where the  $x$-axis shows the bifurcation parameter $\alpha$ and the     $y$-axis shows the $\mathbb{L}^2$  energy norm  of the solution defined
in \eqref{eq:1117}.
The shape of the  curve of flat solutions  represented in black  is typical as shown in Fig.~\ref{fig:1641}.
The bifurcation diagram has been obtained using a Finite Difference discretization of the interval $[-\pi,\pi]$ with $N=2000$ nodes.
In Fig.~\ref{fig:1705},  the crossing of two solution curves does not mean that the two solutions are the same at the intersection point but only that the two solutions have the same   $\mathbb{L}^2$ norm. By choosing  a different  norm  for the $y$-axis, one would obtain a different picture of the bifurcation diagram.

\begin{table}[ht]
\begin{tabular}{|c|c|l||c|c|l|}
\hline
$\ell$ & $k$ & \hspace*{8mm} $\alpha$&$\ell$ &$k$ & \hspace*{8mm} $\alpha$ \\
\hline
1&5& 1.87547953318&
2&4 & 2.53835192735\\
3&6 & 3.97283765189 &
4&3&4.99011120397 \\
5&7&7.14357727273 &
6&2&7.22722777127 \\
7&7&7.69354056103 &
8& 1&7.70074166946 \\
9&6&6.8724727121 &
10&5&5.92544898526 \\
11&4&5.20621519366 &
12&3&4.72595101432 \\
13&2 &4.44909864888 &
14&1 &4.32300480018 \\
\hline
%
%We have found 2 fold points
%FP for alpha = 7.71269184443
%FP for alpha = 3.51502809999
\end{tabular}
\caption{Bifurcation points (BP) on the curve of flat solutions for  $\beta=-0.2$ and $F=1.6$.
The label $\ell$ refers to the position of the BP on the bifurcation diagram depicted on Fig.~\ref{fig:1705} 
and $k$ is the integer in the dispersion relation~\eqref{eq:3717}.}\label{table:1659}
\end{table}

\begin{figure}
\begin{center}
\includegraphics[width=0.9\linewidth]{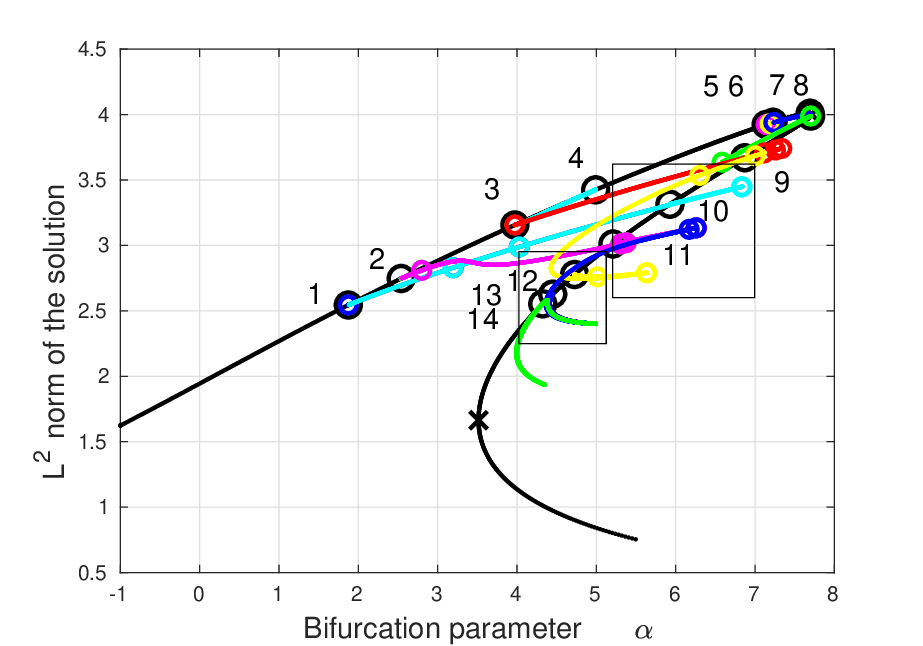}
\includegraphics[width=0.9\linewidth]{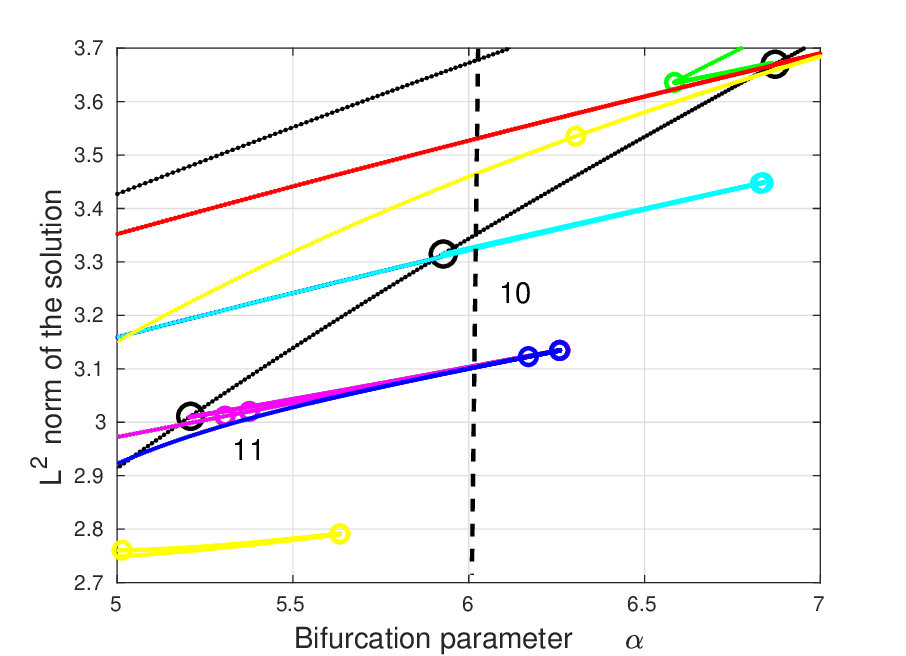}
\includegraphics[width=0.9\linewidth]{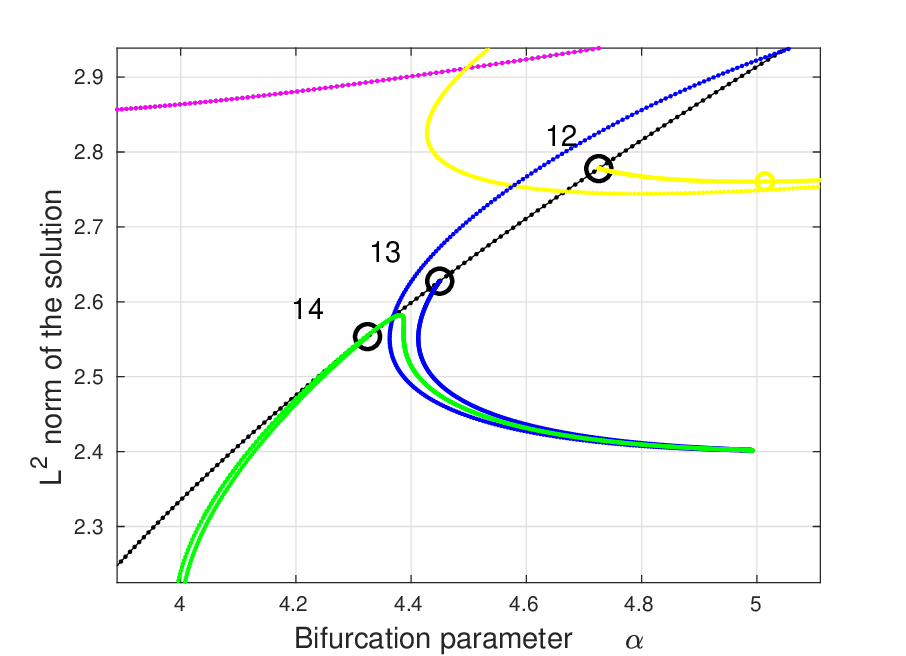}
\caption{Bifurcation diagram. 
Integers refer to the label~$\ell$ of the BP as given in Table  \ref{table:1659}.
Top: BP (black circle) on the curve of flat solutions (black curve)  for  $\beta=-0.2$ and $F=1.6$ and branches of solutions bifurcating from the BP (color curves).
Middle: Zoom on the bifurcation diagram of the larger rectangle area where the bifurcation curves cross the line $\alpha=6$.
 Bottom: Zoom on the bifurcation diagram corresponding to the smaller rectangle area.} \label{fig:1705}
\end{center}
\end{figure}

From the bifurcation diagram in Fig. \ref{fig:1705}, we can understand the reasons why solving the steady state FP-LLE by a Collocation method for a given set of LLE parameters $(\beta, \alpha, F)$ is not easy.
For instance, one can see on  Fig.~\ref{fig:1705}  that for the chosen values of $\beta$ and $F$ ($\beta=-0.2$ and $F=1.6$), we have for the value $\alpha = 6$ at least~$8$ steady state solutions in addition to the three flat solutions (consider the intersections of the vertical line $\alpha=6$ with the curves of solutions and observe that the branches related to BP   with $\ell=$ 10, 11 and 13 have a cusp and cross two times the line $\alpha=6$).
Note that the actual number of steady state solutions is probably  higher since we have only represented here the branches of solutions that bifurcate from the curve of flat solutions while other branches are likely to bifurcate from the  branches issued from the flat solutions curve.
Moreover, one can see from the bifurcation diagram shown at the bottom of   Fig.~\ref{fig:1705}   that the shape of the bifurcation lines can be rather complicated.

The comparison of the bifurcation diagram in Fig.~\ref{fig:1705} for the FP-LLE ($\sigma=1$) to Fig.~5 in \cite{Mandel:17} that represents the bifurcation diagram for the RS-LLE  ($\sigma=0$) for the same set of parameters  ($\beta=-0.2$, $F=1.6$) highlights  similarities and differences between the two kinds of resonators. For instance, one can see that the general shape of the  bifurcation diagrams is similar  and that the number of BP is the same. This can be explained in view of Remark~\ref{rem:1853} p.~\pageref{rem:1853}.
Yet, we can see that with a FP-resonator the range of value for $\alpha$ for which there exist steady state solutions other than flat solutions is wider than this range for a RS-resonator. 
For a RS-resonator, outside the interval $[-0.5, 3.5]$ for $\alpha$, we only have flat solutions  whereas this interval is $[1.5,8]$ for a FP-resonator.
This observation is in accordance with
Proposition~\ref{prop:1443}.

From the steady-state solutions computed by \texttt{pde2path}, we can represent  the corresponding Kerr frequency comb.
For instance, we have depicted in Fig.~\ref{fig:1743}  five steady state solutions and Kerr frequency combs  for the FP-LLE parameters $\beta=-0.2$, $F=1.6$ and $\alpha=6$.
Namely, we have depicted the real, imaginary parts and modulus of the solution $\psi$ as well as the corresponding frequency comb obtained as the sequence
$10\,\log_{10}(| c_n|^2)$ where $(c_n)_{n\in\Z}$ are the Fourier coefficients of $\psi$.

We want to point out that the continuation method provides a thorough knowledge of  steady state solutions to  the FP-LLE  but this requires a long computation time. For instance, obtaining the  bifurcation diagram in Fig.~\ref{fig:1705}   required several hours of computation on a computer work-station.
Moreover, depending on the parameter values it can be difficult to interpret the bifurcation diagram.

\begin{figure}[h]
\begin{center}
\includegraphics[width=0.49\linewidth]{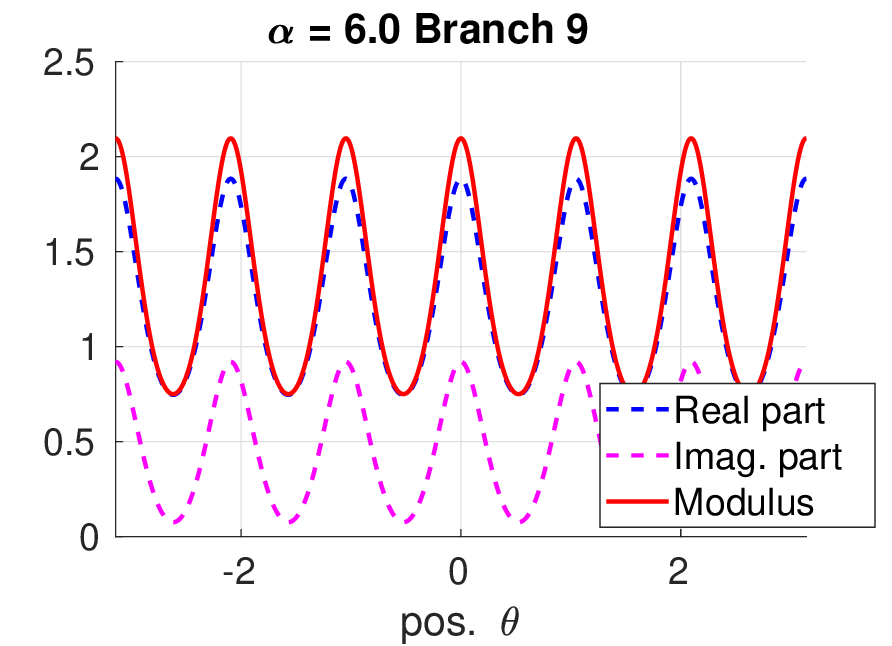}
\includegraphics[width=0.49\linewidth]{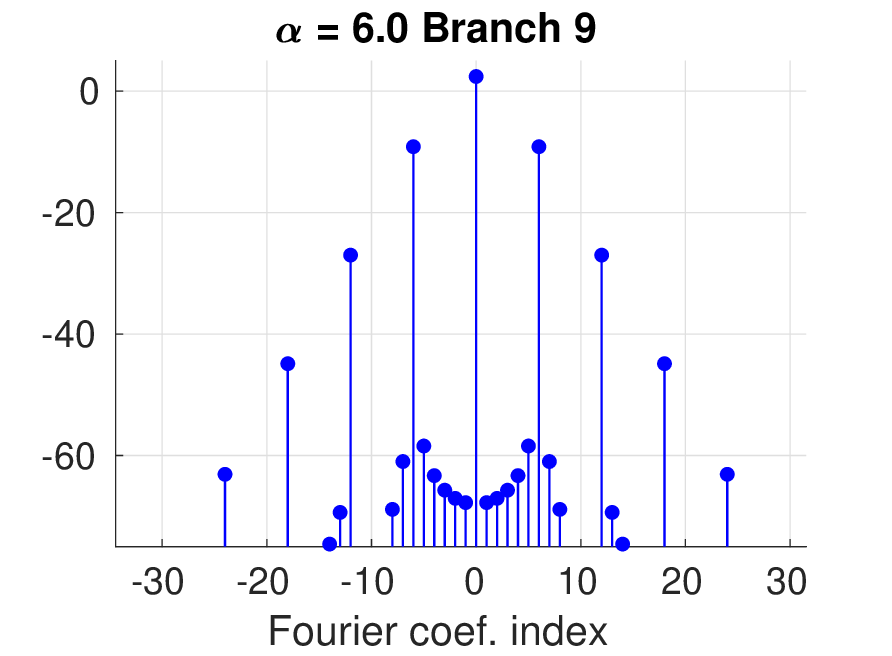}
\includegraphics[width=0.49\linewidth]{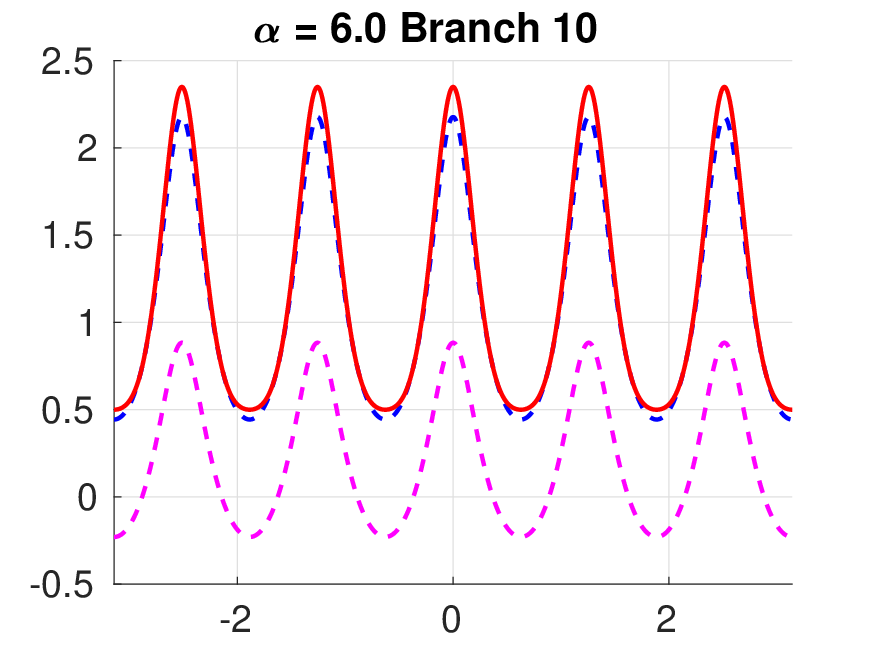}
\includegraphics[width=0.49\linewidth]{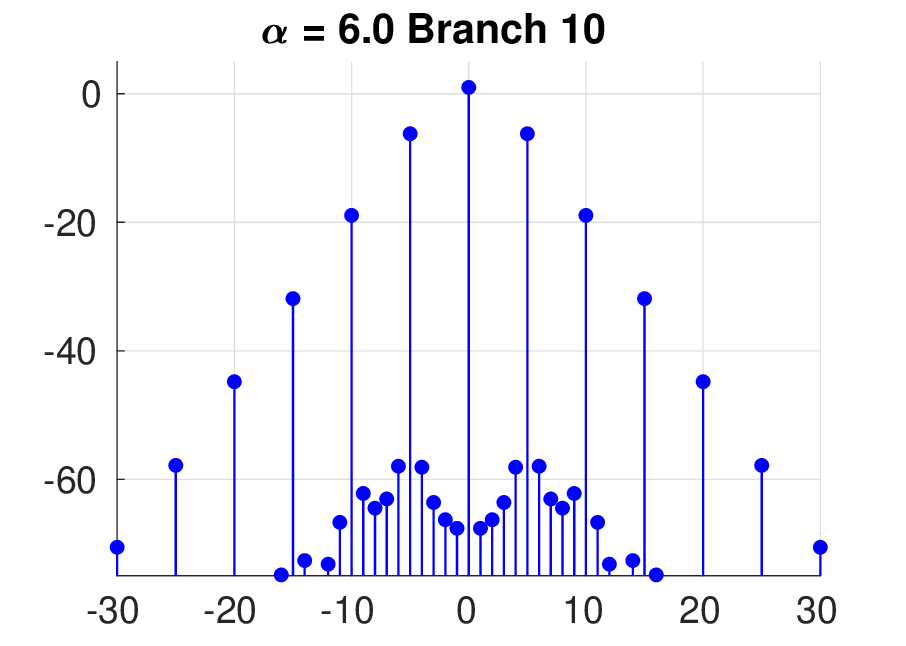}
\includegraphics[width=0.49\linewidth]{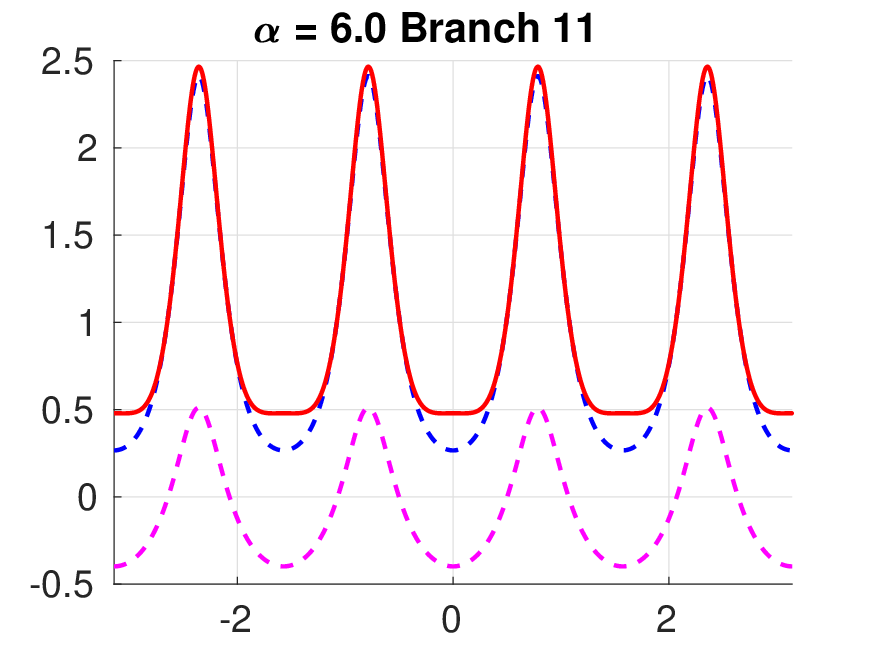}
\includegraphics[width=0.49\linewidth]{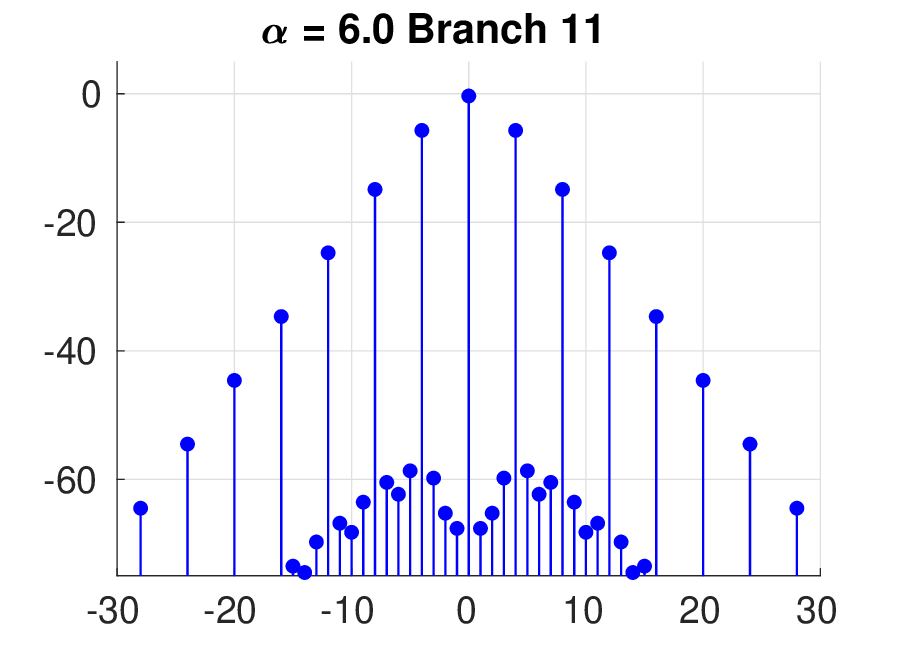}
\includegraphics[width=0.49\linewidth]{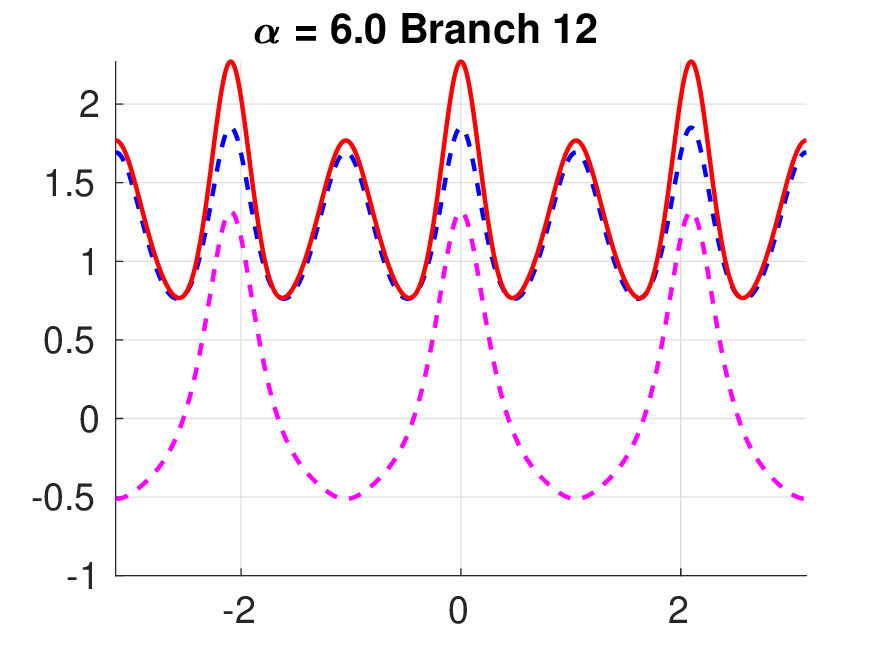}
\includegraphics[width=0.49\linewidth]{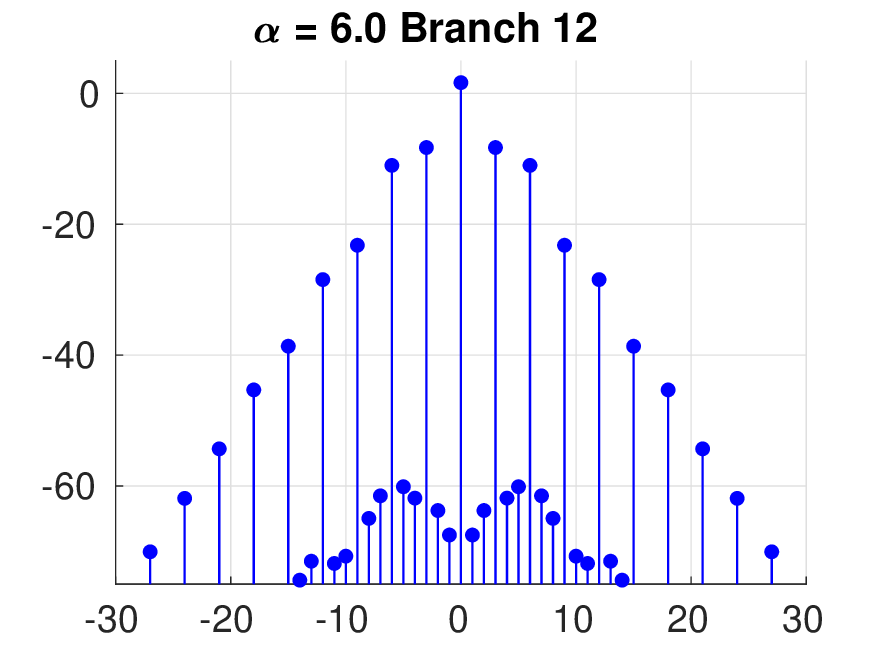}
\includegraphics[width=0.49\linewidth]{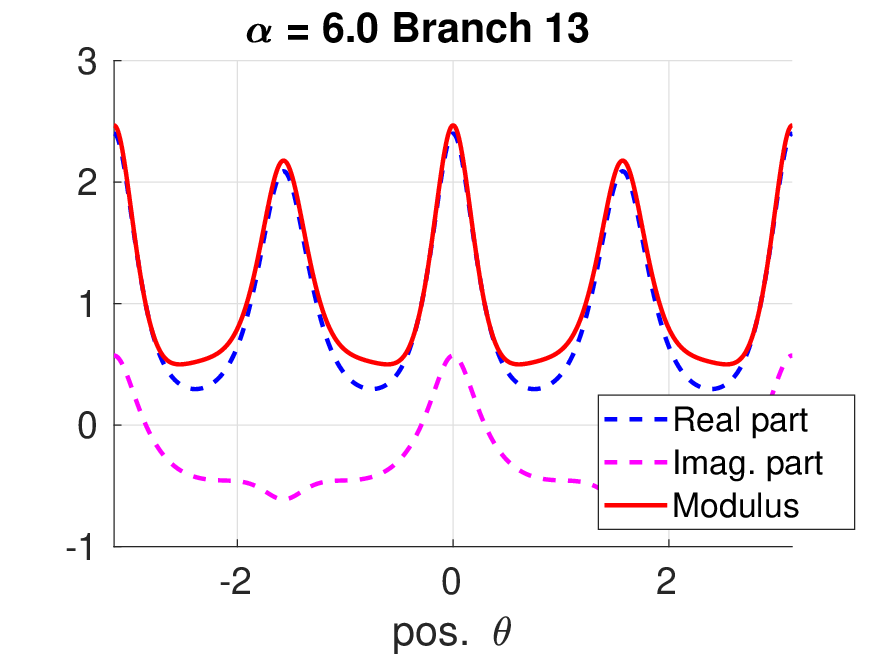}
\includegraphics[width=0.49\linewidth]{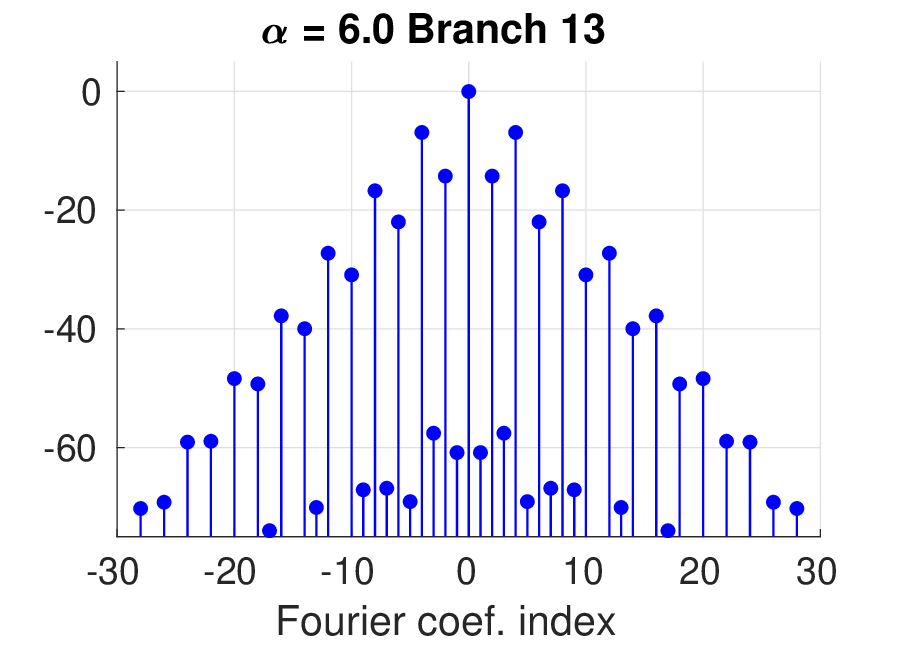}
\caption{Solutions to the FP-LLE for $\alpha=6, \beta = -0.2$ and $F=1.6$  (left) and corresponding Kerr frequency combs (right) obtained on the bifurcation branches starting from BP 9, 10, 11, 12 and 13 (from top to bottom).
On the left, the real part of the solution is drawn in blue dashed line, its imaginary part in magenta dashed line  and it modulus in red solid line.
The frequency comb on the right is a representation of the sequence
$10\,\log_{10}(| c_n|^2)$ where $(c_n)_{n\in\Z}$ are the Fourier coefficients of the solution.
}\label{fig:1743} 
\end{center}
\end{figure}

%= = = = = = = = = = = = = = = = = = = = = = = = = = =
\subsection{Solutions to the dynamic LLE for $\beta=-0.2$, $F=1.6$ and $\alpha=6$}
%= = = = = = = = = = = = = = = = = = = = = = = = = = =
Let us now consider  the Split-Step method presented in Section \ref{sec:S3F4LLE} and implemented in \textsc{Matlab} Toolbox  \texttt{S3F4LLE}~\cite{Balac:22a} to 
solve the time-dynamic FP-LLE~\eqref{eq:mod}.
From the physics  point of view,  a precise knowledge of the initial state of the  FP cavity (described by the function $\psi_0$  in the initial condition  \eqref{eq:IC}) is merely impossible and  for the purpose of numerical simulation, we   have to make the choice of  manufactured initial conditions.   
Let us consider for instance an initial condition corresponding to a finite  lattice  sum of Gaussian functions
\begin{equation}\label{eq:1510}
g_{m,P} (\theta) = \sum_{p=-P}^P \sum_{\ell=1}^m \e^{- \frac{1}{|\beta|}\big( \theta+2p\pi + \frac{2\ell-m-1}{m}\pi\big)^2} .
\end{equation}
Here the main parameter is the integer $m$ aimed at exciting frequency combs of width $k=m$. The integer parameter $P$ allows, function of the value of $\beta$, a precise approximation of exact periodicity at $\theta=\pm\pi$.

We consider some of the solutions corresponding to the bifurcation diagram depicted in Fig.~\ref{fig:1705}.
We have depicted in Fig.~\ref{fig:1104}  the time evolution of the  solution to the FP-LLE~\eqref{eq:mod}  for $\alpha=6, \beta = -0.2$ and $F=1.6$   when the initial condition is $\psi_0 = g_{4,2} + \psi_{\bullet,2}$ where  $ \psi_{\bullet,2}$ denotes the flat solution with intermediate value among the 3  flat solutions of the FP-LLE for this set of parameters.
Simulation is carried out until normalized time $T=100$ with $10^5$  sampling points
and $2^8$ space  sampling points for the FFT computations.
One can see that after strong variations of the solution over a short duration (less than 10 units of time), the solution converges toward a steady state solution that has the features of the steady state solution observed on branch 11, see Fig.~\ref{fig:1743}.
The spacing of the comb teeth is 4 units in accordance with the value of $k$ given in Table~\ref{table:1659}.
Compared to the comb depicted in Fig.~\ref{fig:1743} for branch 11, the  pattern at the bottom of the comb  in Fig.~\ref{fig:1104}  has disappeared which shows that the bottom pattern is a numerical artifact.

\begin{figure}[h]
\begin{center}
\includegraphics[width=0.49\linewidth]{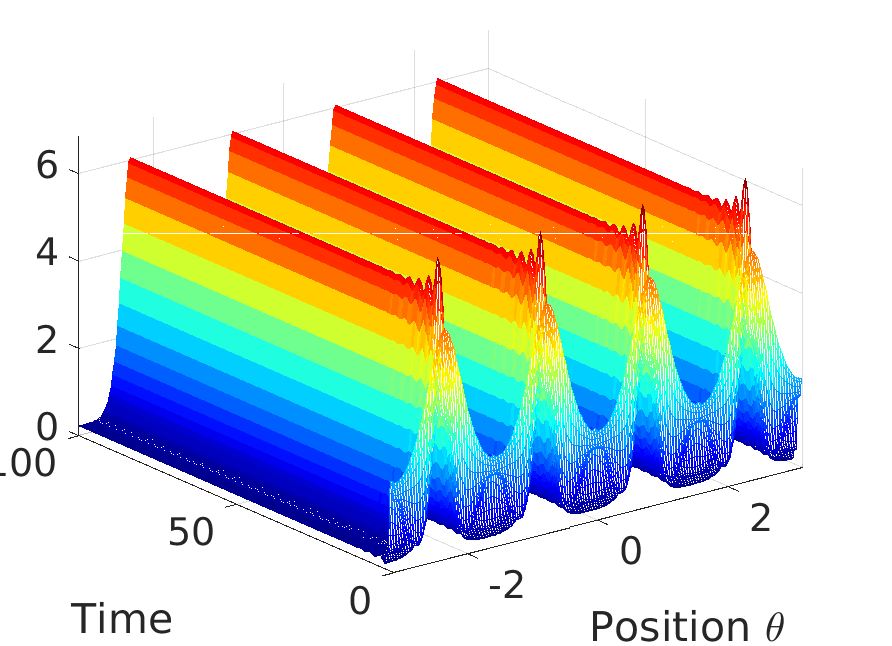}
\includegraphics[width=0.49\linewidth]{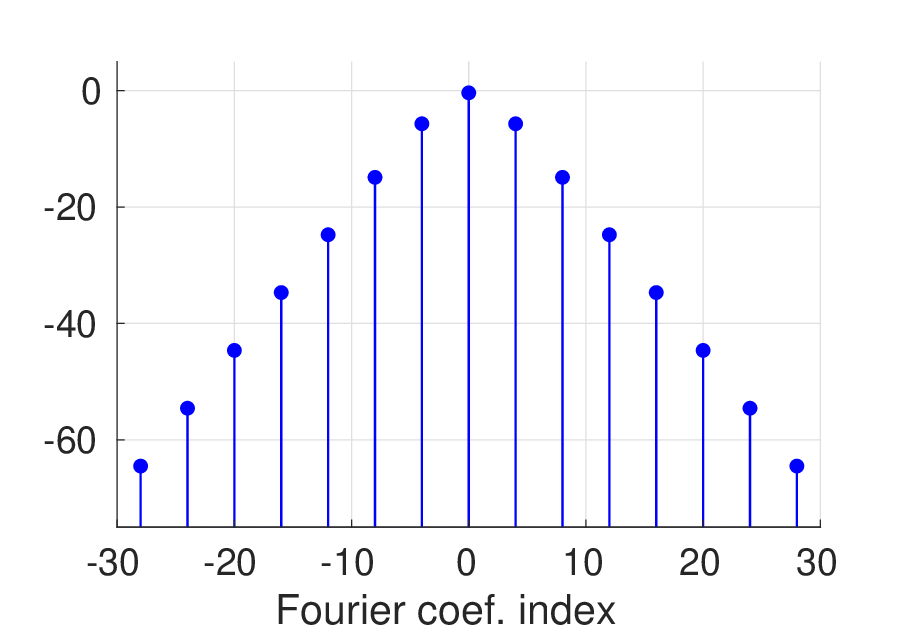}
\caption{Left:
Time evolution of the square modulus   $|\psi|^2$ of the  solution $\psi$ to the time-dynamic  FP-LLE for $\alpha=6, \beta = -0.2$ and $F=1.6$  
with  initial condition corresponding to a finite  lattice  sum of Gaussian functions  \eqref{eq:1510} with $P=2$ and $m=4$.  Right: Corresponding Kerr frequency comb.}\label{fig:1104} 
\end{center}
\end{figure}

\medskip

Note that not all the steady state solutions computed by the continuation method are stable.
The question of the stability of steady state solutions to the RS-LLE ($\sigma=0$) is investigated from a mathematical point of view in \cite{Miyaji:10}.
For illustrative purposes, we have depicted in Fig.~\ref{fig:2104} the  solution to the time dynamic FP-LLE~\eqref{eq:mod} at time $100$  with the same set of parameters as those of  Fig.~\ref{fig:1104} excepted that $m=5$ in the  initial condition given by the finite  lattice  sum of Gaussian functions.
One can see that this solution matches well with the steady state solution obtained by the continuation method for $\alpha=6$ on branch $10$,  see Fig.~\ref{fig:1743}.
However, when the evolution of the solution is observed on a duration of 250 units of time,  one can see that it seems to oscillate (or to hesitate) between two shapes (that are very close to each other) but finally around time $210$ the solution tends to the smaller of the 3  flat solutions, see Fig.~\ref{fig:2105}.
 The same observation can be made when using as initial condition the finite  lattice  sum of Gaussian functions with $m=3$ or $m=6$ where the solution  firstly  look very close to the steady state solution observed on branch 9 before to suddenly change its behavior and to converge to  the smaller   flat solution.

\begin{figure}[h]
\begin{center}
\includegraphics[width=0.49\linewidth]{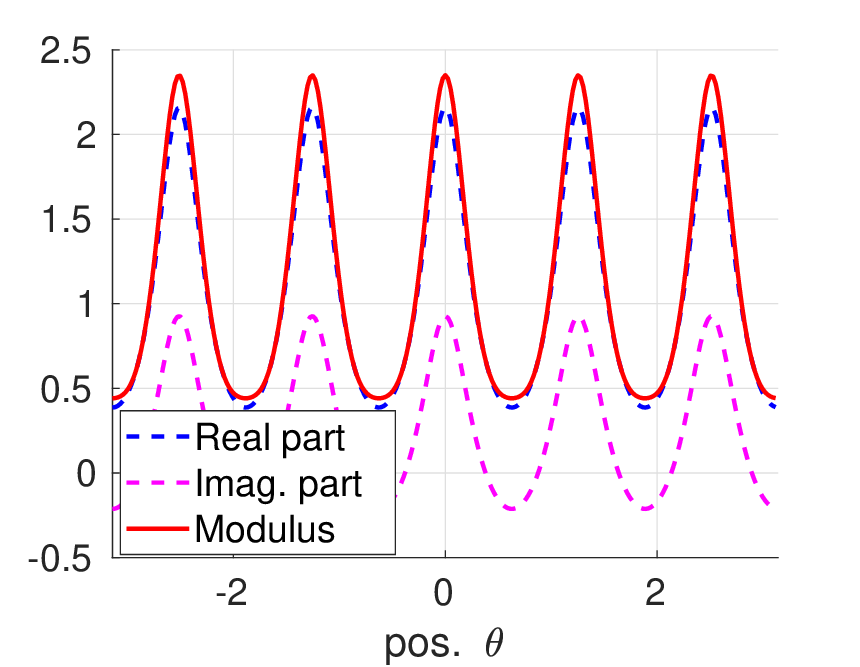}
\includegraphics[width=0.49\linewidth]{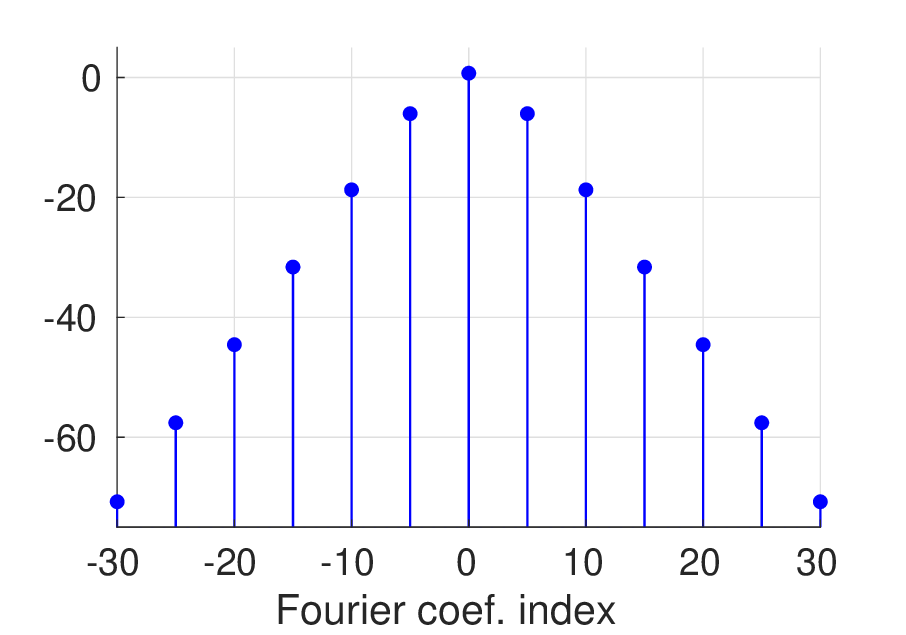}
\caption{Left: Solution to the time-dynamic  FP-LLE for $\alpha=6, \beta = -0.2$ and $F=1.6$  
with  initial condition corresponding to a finite  lattice  sum of Gaussian functions  with $P=2$ and $m=5$ at time $100$.  Right: Corresponding Kerr frequency comb.}\label{fig:2104} 
\end{center}
\end{figure}

\begin{figure}[h]
\begin{center}
\includegraphics[width=0.49\linewidth]{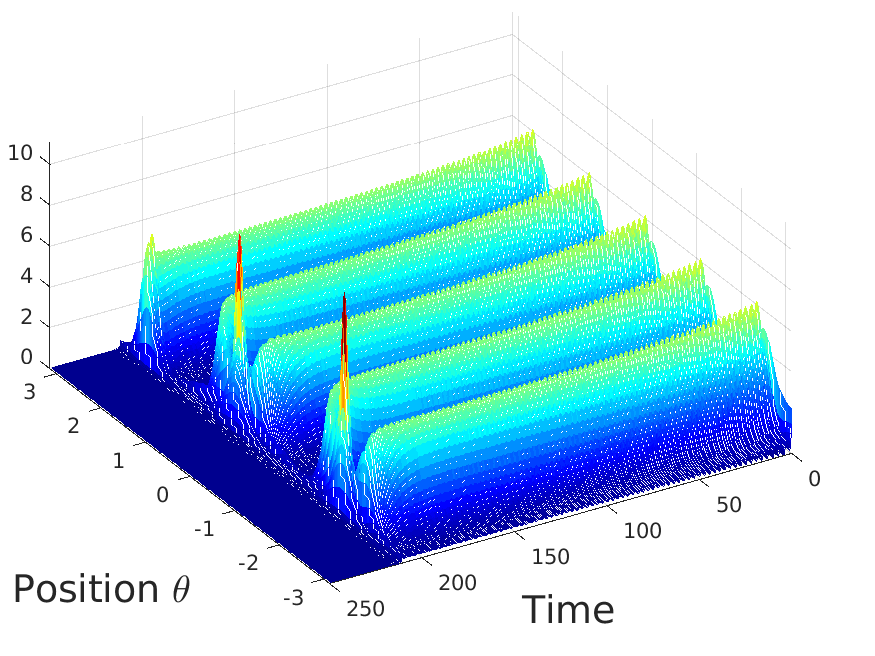}
\includegraphics[width=0.49\linewidth]{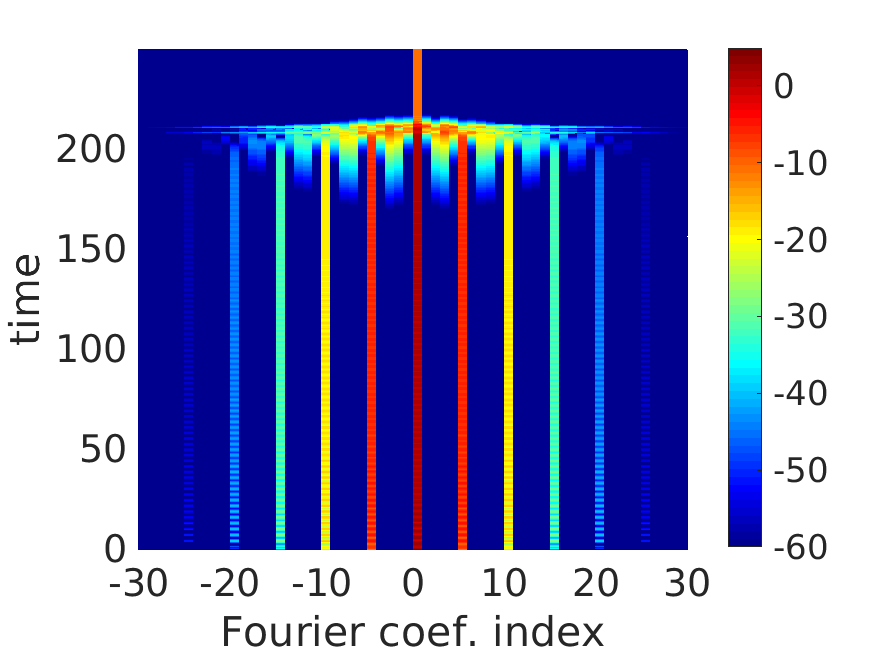}
\caption{Left:  Time evolution  of the square modulus of the solution to the time-dynamic  FP-LLE for the parameters of Fig.~\ref{fig:2104}.   Right:  Variation of the solution spectrum with time.}\label{fig:2105} 
\end{center}
\end{figure}

\medskip

As a last illustration, let us consider the set of parameters $\beta=-0.2$, $F=1.6$ and $\alpha=4$
with initial condition $\psi_0 = g_{6,2} + \psi_{\bullet,2}$ where  $ \psi_{\bullet,2}$ denotes the flat solution with intermediate value among the 3  flat solutions of the FP-LLE for this set of parameters
and $g_{2,2}$  the finite  lattice  sum of Gaussian functions.
We have depicted in Fig.~\ref{fig:0955}  the  evolution of the square modulus of the  solution  with time and we can see  four distinct behaviors for the solution from the initial condition to the steady state that appears from time $150$. 
By using the bifurcation diagram given in  Fig.~\ref{fig:1705} and the various solution shapes along the bifurcation branches  depicted in Fig.~\ref{fig:1743}, one can infer that the solution has converged to the steady state solution of branch 13. 
It is also interesting to run a simulation with  initial condition $\psi_0 = g_{2,2} + \psi_{\bullet,3}$ where  $ \psi_{\bullet,3}$ denotes the flat solution with higher value.
The evolution of the solution with time also exhibits  four distinct stages but this time the solution converges to the steady state solution of branch 11 as illustrated in Fig.~\ref{fig:0956}.
One can note by comparing Fig.~\ref{fig:0956} and Fig.~\ref{fig:0955}  that when the steady state regime is reached, the two combs are identical and that the two solutions are translated from each other. This observation is in accordance with the remark made in Section~\ref{sec:21}  regarding the translation invariance properties of steady state solution.

\begin{figure}[h]
\begin{center}
\includegraphics[width=0.49\linewidth]{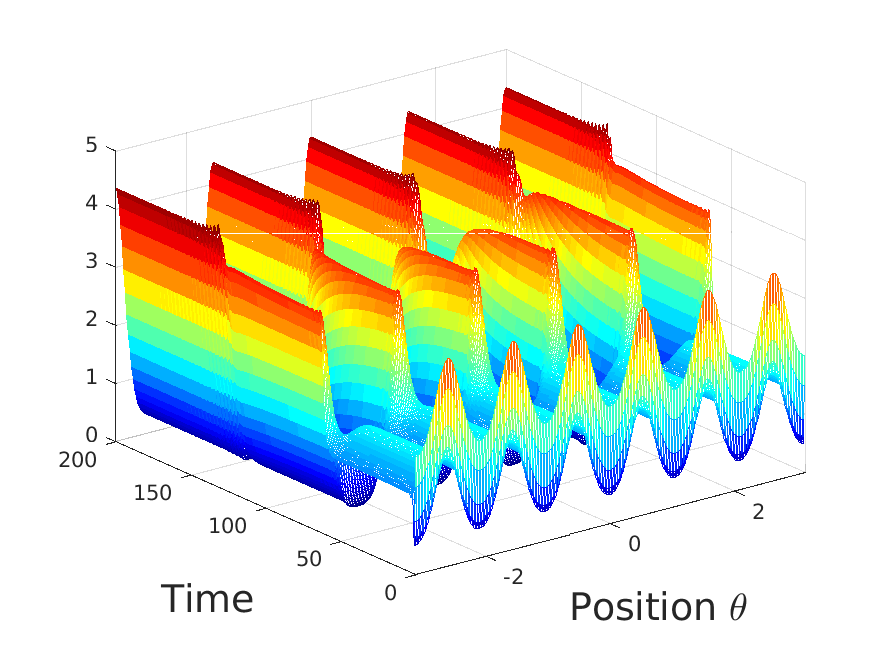}
\includegraphics[width=0.49\linewidth]{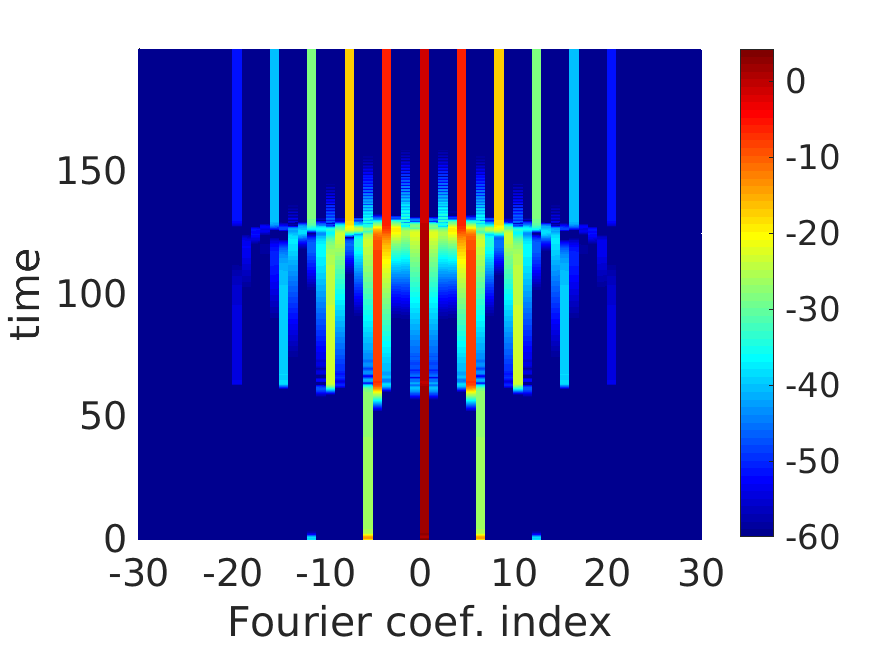}
\includegraphics[width=0.49\linewidth]{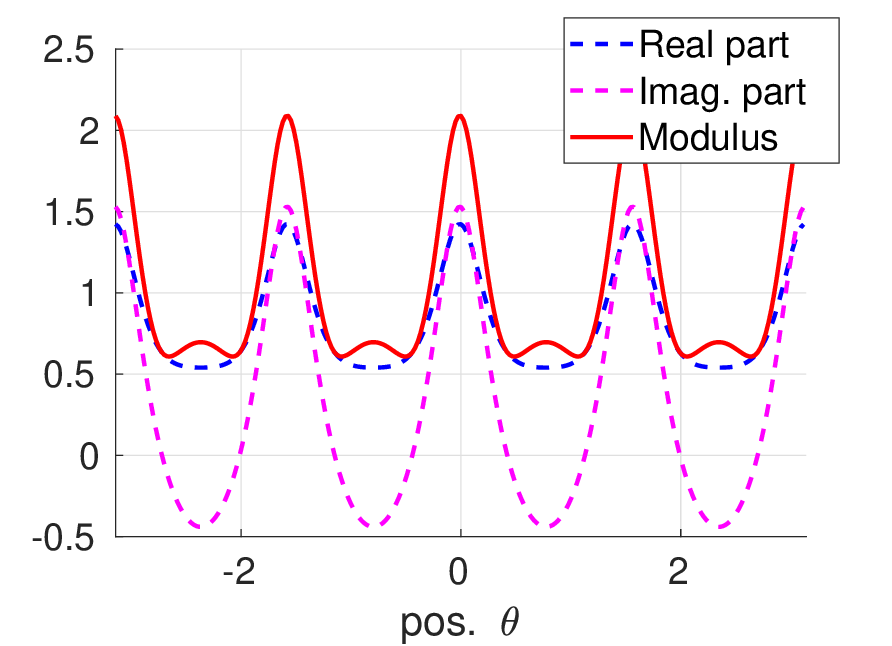}
\includegraphics[width=0.49\linewidth]{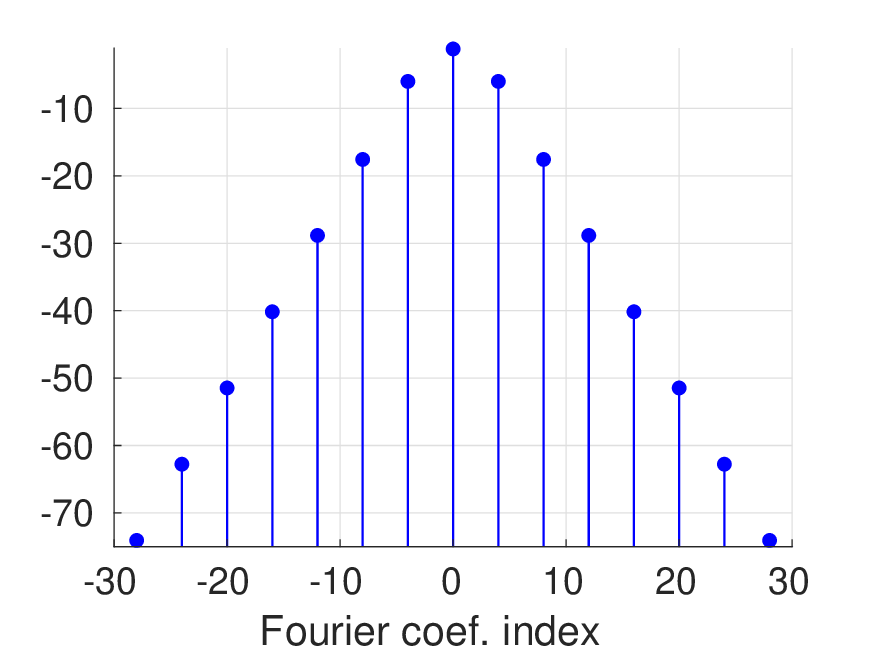}
\caption{Top-left:  Time evolution of the square modulus of the  solution to the time-dynamic  FP-LLE for $\beta=-0.2$, $F=1.6$ and $\alpha=4$
with initial condition $\psi_0 = g_{6,2} + \psi_{\bullet,2}$.   Top-right:  Variation of the solution spectrum with time.
Bottom-left: Solution at final time $T=200$. Bottom-right: Kerr frequency comb of the steady state solution.
}\label{fig:0955} 
\end{center}
\end{figure}

\begin{figure}[h]
\begin{center}
\includegraphics[width=0.49\linewidth]{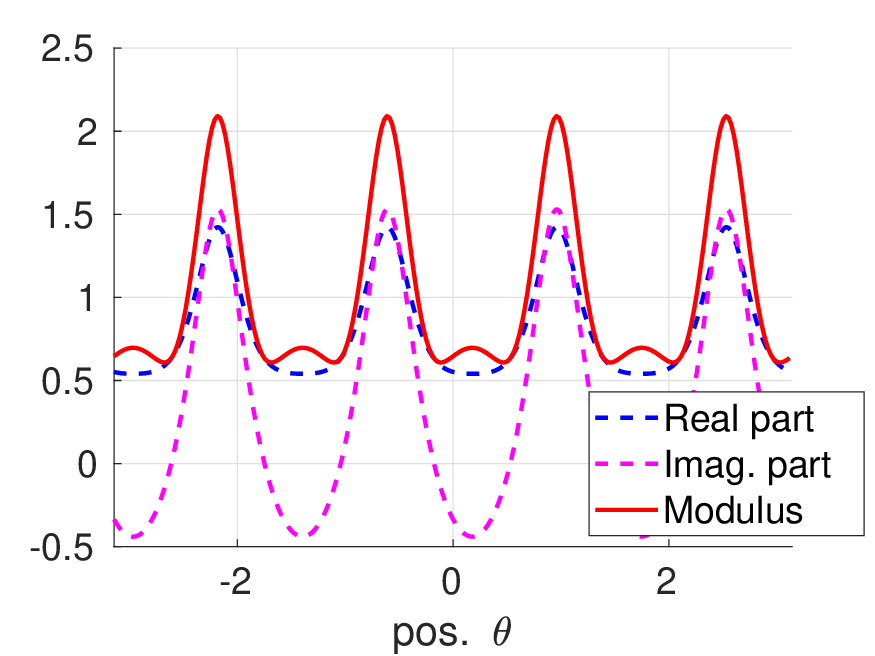}
\includegraphics[width=0.49\linewidth]{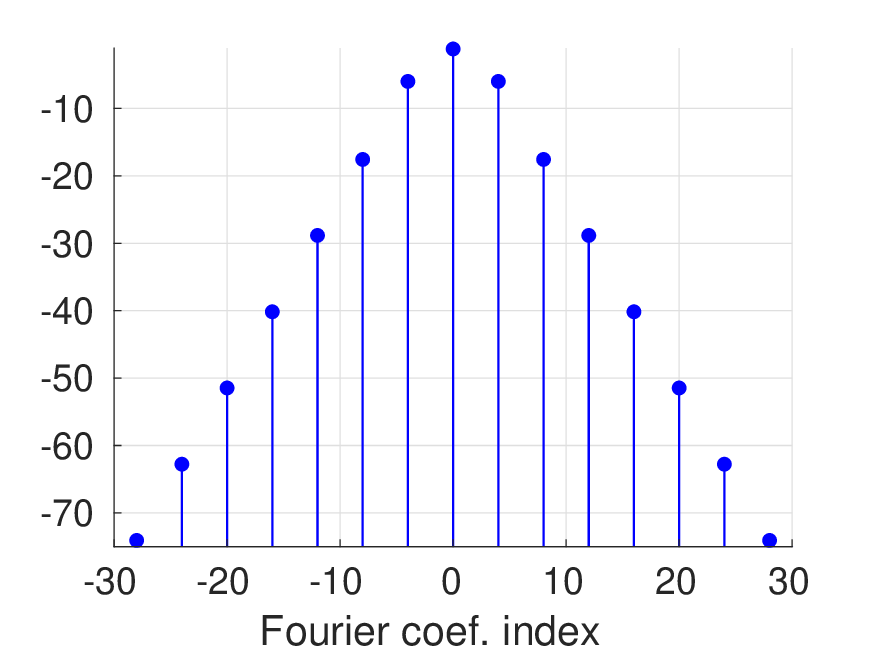}
\caption{Left:  Solution at time $T=200$ to the  FP-LLE for  $\psi_0 = g_{6,2} + \psi_{\bullet,3}$ with other parameters as in Fig.~\ref{fig:0955}.   Right:  Kerr frequency comb of the steady state solution.
}\label{fig:0956} 
\end{center}
\end{figure}

\medskip

In view of the results presented above, the question of whether a steady state has been obtained or not is major but not easy to answer from the graphical representation of the  solution.
In order to have a more effective indicator, we propose to compute at each time-step $t_n$ the quantity 
\begin{equation}\label{eq:1502}
 \mathcal{V}(\psi,t_n) = \max_{\theta\in[-\pi,\pi]} \left| \frac{\psi(\theta,t_n) - \psi(\theta,t_{n-1} ) } { t_n-t_{n-1}} \right| .
\end{equation}
This numerical indicator   shows  how the computed solution $\psi$  varies  from a time-step  to the other and therefore, values  of $ \mathcal{V}(\psi,t_n)$ closed to the machine epsilon indicate that a steady state regime has been reached. 
We have depicted in Fig.~\ref{fig:1059} the variations with time of the decimal logarithm of   $ \mathcal{V}(\psi,t_n) $  for the simulation reported in Fig.~\ref{fig:0955}.
One can easily identify the various states the solution goes through in the interval $[0,150]$ before converging  to its final steady state that is fully 
attained  around time $400$.

\begin{figure}[h]
\begin{center}
\includegraphics[width=0.75\linewidth]{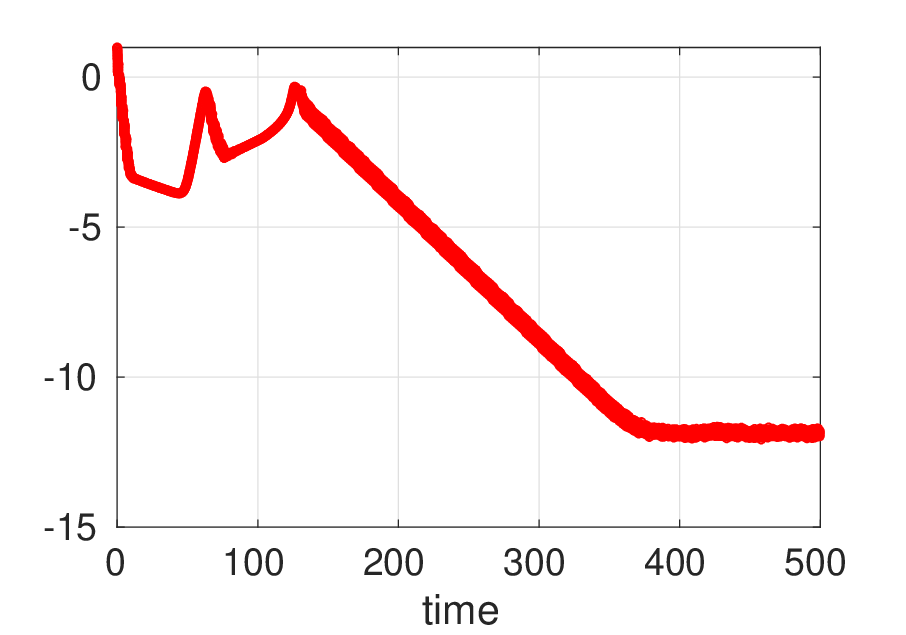}
\caption{Variations with time of the decimal logarithm of the indicator $ \mathcal{V}(\psi,t_n) $  for the simulation reported in Fig.~\ref{fig:0955}.
}\label{fig:1059} 
\end{center}
\end{figure}

To conclude, in connection with Kerr comb generation, we can stress once again the leading  role played by  the  initial data $\psi_0$  on the type of steady state solution attained (even if no  simple rule of thumb can be formulated)
and the need of a  steady state indicator such as the one defined in \eqref{eq:1502} to have the assurance that a  steady state has been reached when the simulation is ended.

%= = = = = = = = = = = = = = = = = = = = = = = = = = =
\subsection{Stationary solutions computed by the Collocation method}
%= = = = = = = = = = = = = = = = = = = = = = = = = = =

In \cite[Section V]{Cole:18}, the authors provide an analytical approximate  expression for Soliton solution to the FP-LLE. (Note that the use of this expression requires the solving of a non-linear equation and therefore necessitates a numerical solver anyway.)
When they exist, Solitons solutions to the  FP-LLE can be computed  more accurately by the Collocation method described in Section \ref{sec:1640}
and implemented in our   \textsc{Matlab} toolbox \texttt{COLLE} \cite{Balac:23}.
For instance we have depicted in Fig.~\ref{fig:1501}  the Soliton solution  to the  FP-LLE for the values  $\beta=-0.02$, $F=\sqrt{3}$ and $\alpha=4.37$ used in  \cite{Cole:18}, obtained by the Collocation method for an initial guess $\psi_{\bullet,1}+g_{1,4}$  where  $ \psi_{\bullet,1}$ denotes the flat solution with lower value and $g_{1,4}$ the finite  lattice  sum of Gaussian functions~\eqref{eq:1510}
and a number of discretization nodes $N=10^3$.
The CPU time required for this computation is 9.1~s.
The relative error between the approximate analytical expression of the Soliton proposed in \cite{Cole:18}
and the solution computed by the Collocation method  measured in the energy norm is $5.9\%$.

\begin{figure}[h]
\begin{center}
\includegraphics[width=0.49\linewidth]{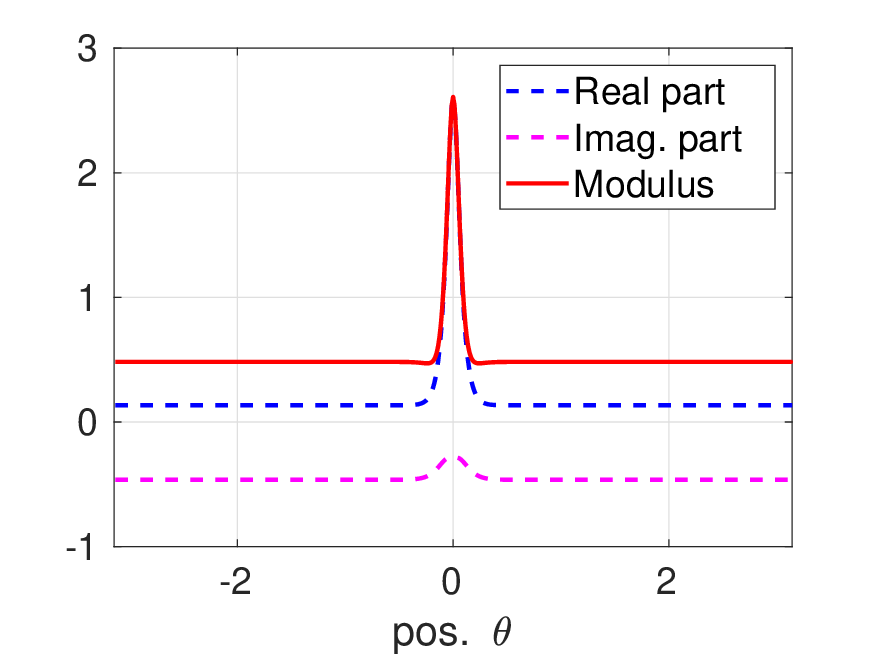}
\includegraphics[width=0.49\linewidth]{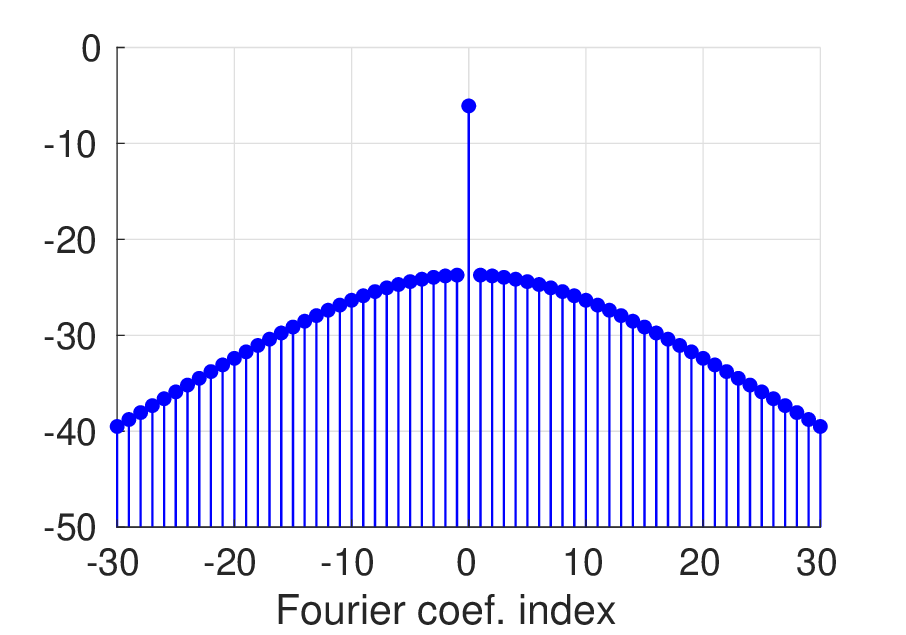}
\caption{Left:  Soliton solution  to the  FP-LLE for $\beta=-0.02$, $F=\sqrt{3}$ and $\alpha=4.37$.
Right:  Kerr frequency comb of the Soliton.
}\label{fig:1501} 
\end{center}
\end{figure}

\medskip

It is also possible by using the Collocation method to compute solutions to the FP-LLE referred to as  Turing patterns in \cite[Section VI]{Cole:18}  and observed for $\beta=-0.02$, $F=\sqrt{6}$ and $\alpha=2.5$.
 We have depicted in Fig.~\ref{fig:1606}  the  solution computed  by the Collocation method for an initial guess $\psi_0: \theta\mapsto \psi_{\bullet,1}+ \cos(m \theta)$ for $m=16$.
 
\begin{figure}[h]
\begin{center}
\includegraphics[width=0.49\linewidth]{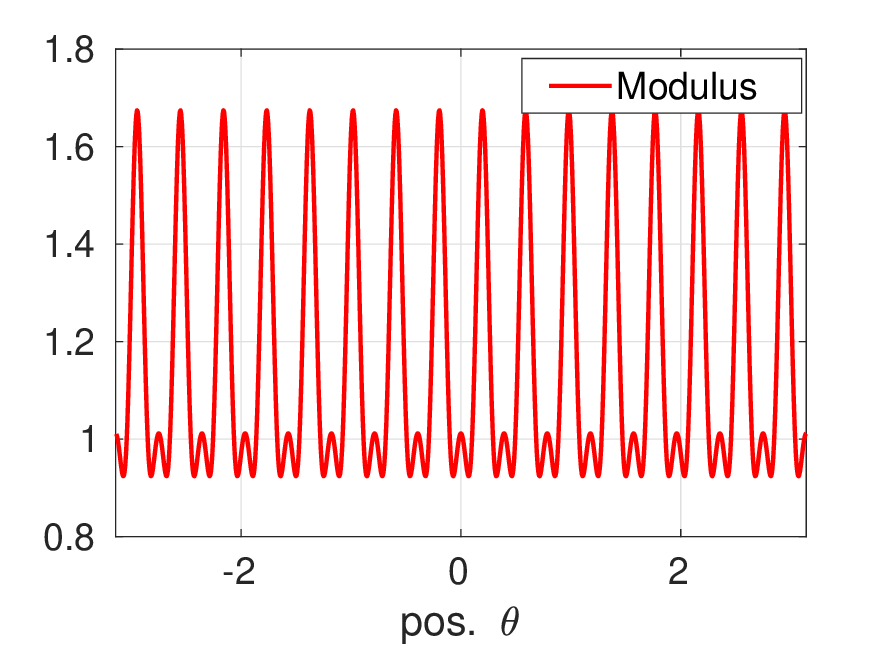}
\includegraphics[width=0.49\linewidth]{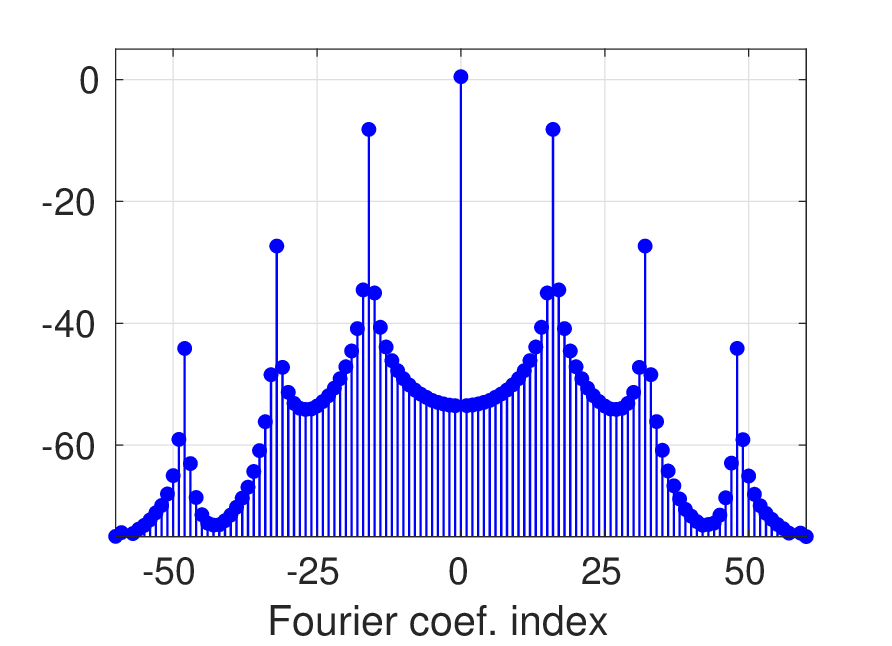}
\caption{Left:  Turing pattern  solution  with $16$ rolls  to the  FP-LLE for $\beta=-0.02$, $F=\sqrt{6}$ and $\alpha=2.5$.
Right:  Kerr frequency comb of the computed  Turing pattern. The main teeth of the comb are located every 16 units.  
}\label{fig:1606} 
\end{center}
\end{figure}

%====================
\section{Conclusion}
%====================

We have presented three numerical approaches to study by numerical simulation Kerr frequency combs in optical resonators.
We have considered two variants of the  Lugiato-Lefever equation (LLE)  that were introduced in the literature to  model respectively light-wave propagation in ring-shape  (RS) resonators   and Fabry-Perot (FP) resonators. We have focused our attention on the  LLE for FP resonators because this topic has been introduced recently and  is presently  extensively investigated in optics. The particularity of  the FP-LLE  compared to the RS-LLE is  an additional integral term that makes the LLE non-linearity spatially non-local and consequently  makes its numerical resolution more difficult. 

 The three numerical approaches we have described in the paper include a Split-Step method to solve the time-dynamic FP-LLE and a Collocation method to solve the steady state FP-LLE.
These methods are very efficient but the computed solution highly depends on the initial condition when solving   the   time-dynamic FP-LLE  or on the initial guess solution when  solving  the steady state LLE by the Collocation method and these initial data are not easy to know or to guess.
 We have pointed out in the paper the specific role played by  flat solutions, \textsl{i.e.} spatially and temporally constant solutions,  and we found that they are the solutions the most often observed during numerical simulations if the initial data has not the specific features to induce convergence to a non-flat steady state solution.
 In order to overcome this issue, we have presented a continuation method with respect to the parameters $\alpha$ or $F$ of the LLE that exhibits bifurcation branches from flat solutions.
 We obtain a comprehensive picture of the various solutions to the steady state LLE 
 and this constitutes an  important help in the interpretation of the results provided by   the
 Split-Step and the Collocation methods.
 
 The numerical methods presented in this paper to solve the FP-LLE have been implemented under \textsc{Matlab}  and specific toolboxes have been developed and are distributed as free software, see \cite{Balac:22a, Balac:23}

%==========================================================
\section*{Acknowledgments}
%==========================================================

This work has been undertaken in the framework of the ANR project ROLLMOPS (optical resonator with ultra-high quality factor for high spectral purity microwave signals generation) funded by the French Agence Nationale de la Recherche (2021-2024).

\appendix
%====================
\section{Energy estimate for the solution to the time dynamic FP-LLE}\label{app:1124}
%====================
Let us consider a solution $\psi$ to  the time dynamic FP-LLE \eqref{eq:LLE}.
By multiplying \eqref{eq:LLE} by the solution conjugate~$\overline{\psi}$  and integrating over $[-\pi,\pi]$, we obtain
\begin{align}\label{eq:1115}
&\int_{-\pi}^{\pi} \!\!
 \partial_t \psi(\theta,t)   \, \overline{\psi}(\theta,t) \dd\theta 
  =  
 -\ic \frac{\beta}{2} \,   \int_{-\pi}^{\pi} \!\! \partial^2_{\theta\theta} \psi (\theta,t) 
 \, \overline{\psi}(\theta,t) \dd\theta 
 \nonumber\\
&\hspace*{5mm} 
 - (1+\ic \alpha)\, \|\psi(t)\|^2 + \ic \| \psi^2(t)\|^2  + \ic \frac{\sigma}{\pi}\,  \|\psi(t)\|^4 
  \nonumber\\
&\hspace*{5mm} 
+   F    \int_{-\pi}^{\pi} \!\! \overline{\psi} (\theta,t) \dd\theta .  
\end{align}
By using Green Formula and periodic boundary conditions \eqref{eq:BC}, we obtain
\begin{equation*}
  \int_{-\pi}^{\pi}  \partial^2_{\theta\theta}  \psi (\theta,t) 
 \, \overline{\psi}(\theta,t) \dd\theta  = -   \int_{-\pi}^{\pi}  \left| \partial_\theta \psi (\theta,t) \right|^2   \dd\theta  .
\end{equation*}
Thus, considering the real part of \eqref{eq:1115}, we deduce that
\begin{equation}\label{eq:1232}
\frac{1}{2} \frac{\partial}{\partial t} \int_{-\pi}^{\pi}  |\psi(\theta,t)|^2\dd\theta  + \|\psi(t)\|^2 = F     \int_{-\pi}^{\pi} \!\! \Re({\psi}) (\theta,t) \dd\theta 
\end{equation}
and the Cauchy-Schwarz inequality yields
\begin{align*}
F     \int_{-\pi}^{\pi} \!\! \Re({\psi}) (\theta,t) \dd\theta 
&\leq
\pi\, F^2 + \frac{1}{2}   \int_{-\pi}^{\pi} \!\! \Re({\psi})^2 (\theta,t) \dd\theta  \\
&\leq \pi F^2 + \frac{1}{2} \,\|\psi(t)\|^2 .
\end{align*}
We deduce from \eqref{eq:1232} that $\frac{\partial}{\partial t}   \|\psi(t)\|^2  + \|\psi(t)\|^2 
\leq 2 \pi F^2 $
and it follows  that
\begin{equation*}
\frac{\partial}{\partial t} \big( \e^t  \|\psi(t)\|^2 \big) 
\leq 2   \pi F^2 \,  \e^t   .
\end{equation*} 
Finally, integrating this inequality for $t\in[0,T]$, we deduce the following energy estimate for the solution to the FP-LLE:
\begin{equation}\label{eq:1124a}
\forall T>0\quad
\|\psi(T)\|^2\leq   \e^{-T} \, \|\psi_0\|^2+  2\pi F^2\,(1-\e^{-T}) . 
\end{equation}

%====================
\section{Singular points on the curve of flat solutions}\label{app:1125}
%====================
The steady state FP-LLE can be written as   $\cF(\psi)=0$  where
\[
   \cF(\psi) = -i\frac{\beta}{2}\partial_{\theta\theta}\psi - (1+i\alpha)\psi
   +i\psi\Big(|\psi|^2 + \frac{\sigma}{\pi}\|\psi\|^2\Big) + F . 
\]
A necessary condition for a flat solution $\psi_\bullet$ to be the onset of a bifurcation is that $\partial_\psi\cF(\psi_\bullet)$  is not invertible.
The search for  bifurcation points can be notably simplified, a fact highlighted in \cite{Miyaji:10}, if instead of~$\cF$, one considers  the FP-LLE in the form
\[
   \cF_\bullet(v) \eqdef\psi_\bullet^{-1} \,\cF\big(\psi_\bullet(1+v)\big) = 0
\]
with  $v$ as new unknown.
Differentiating $\cF_\bullet$ at $0$ yields
\[
   \partial_v\cF_\bullet(0) = \psi_\bullet^{-1} \,\partial_\psi\cF(\psi_\bullet) \,\psi_\bullet\,.
\]
The complex valued operator $\partial_v\cF_\bullet(0)$ acts on complex valued functions $w$.
Considering the real and imaginary parts of $w$ and $\partial_v\cF_\bullet(0)[w]$, one transforms the expression of $\partial_v\cF_\bullet(0)$ into a $2\times2$ matrix of real valued operator blocks.
Considering the column vector $(\Im(\partial_v\cF_\bullet(0)[w]),-\Re(\partial_v\cF_\bullet(0)[w]))^\top$ as a function of $(\Re w,\Im w)^\top$, we find that the conditions for non-invertibility of
 $\partial_\psi\cF(\psi_\bullet)$
 are given by the non-invertibility of  
\begin{multline}
\label{eq:M+J+K}
    M+J_\bullet+K_\bullet = 
   \begin{pmatrix}
   \frac{\beta}{2}\cA -\alpha \Id & - \Id \\[1ex]
    \Id & \frac{\beta}{2}\cA  - \alpha \Id
   \end{pmatrix}
   \\
   +
   \begin{pmatrix}
   3\rho_\bullet \Id & 0\\[1ex]
   0 & \rho_\bullet \Id
   \end{pmatrix}
   + 
   \begin{pmatrix}
   2\sigma\rho_\bullet (\Id + \cI) & 0\\[1ex]
   0 & 2\sigma\rho_\bullet \Id
   \end{pmatrix}
\end{multline}
in which appear the operators $\cA:w\mapsto-\partial_{\theta\theta}w$ and $\cI:w\mapsto \frac{1}{\pi}\int_{-\pi}^\pi w$. Note that $M$ corresponds to the linear part $-i\frac{\beta}{2}\partial_{\theta\theta} - (1+i\alpha)\Id$ of $\cF$ while $J_\bullet$ and $K_\bullet$ come from the linearization of the cubic parts $\psi\mapsto i\psi|\psi|^2$ and $\psi\mapsto\frac{\sigma}{\pi}i\psi\|\psi\|^2$. Relying on the diagonalization of the operators $\cA$ and $\cI$ in the trigonometric basis $\cos k\theta$, $k\in\N$, and $\sin k\theta$, $k\in\N^*$, it is straightforward to deduce the block diagonalization $ (\widehat L_k)_{k\in\N}$ of $M+J_\bullet+K_\bullet$ where
{\small
\begin{equation}
\label{eq:Lk}
 \hspace*{-4mm}   \widehat L_k =  \begin{pmatrix}
   \frac{\beta}{2}\,k^2-\alpha + \rho_\bullet(3+2\sigma) + 4\sigma\rho_\bullet\delta_{0k}
   & - 1 \\[1ex]
    1  & \hspace*{-15mm} \frac{\beta}{2}\,k^2 - \alpha + \rho_\bullet(1 + 2\sigma)
   \end{pmatrix}
\end{equation}
}%
from which we find the following dispersion relations necessary to have a bifurcation point at $\psi_\bullet$  (compare with \cite[eq. (54)]{Cole:18}): $  \exists k\in\N,\ \exists\varepsilon\in\{\pm1\}$ such that
\begin{eqnarray}\label{eq:disp}
& \hspace*{-15mm}
   -\alpha + 2\rho_\bullet(1+2\sigma) = \varepsilon\sqrt{\rho_\bullet^2(1+2\sigma)^2-1}
    &\mbox{if}\quad k=0\\[1ex]
& \hspace*{-18mm}  \frac{\beta}{2}\,k^2-\alpha + 2\rho_\bullet(1+\sigma) = \varepsilon\sqrt{\rho_\bullet^2-1}
   & \mbox{if}\quad k\neq0\nonumber .
\end{eqnarray}

\medskip

A similar investigation of necessary conditions for bifurcations from the curve of flat solutions can be achieved for the discrete version
\eqref{eq:1218}  of the steady-state FP-LLE.
When $U=U_\bullet$ is associated to a flat solution $\psi_\bullet=u_{\bullet,1}+iu_{\bullet,2}$, we have $u_{1,n}=u_{\bullet,1}$ and $u_{2,n}=u_{\bullet,2}$ for all $n$, and $U^\top U = N(u_{\bullet,1}^2+u_{\bullet,2}^2)=N\rho_\bullet$. Hence formula~\eqref{eq:1639} becomes
{\small
\begin{multline*}
   \partial_UG_h(U_\bullet,\lambda) = \mathrm{M} +
   \begin{pmatrix}
   (3u_{\bullet,1}^2+u_{\bullet,2}^2) \,\mathrm{I}_N & 2u_{\bullet,1}u_{\bullet,2} \,\mathrm{I}_N \\[1ex]
   2u_{\bullet,1}u_{\bullet,2} \,\mathrm{I}_N  & (u_{\bullet,1}^2+3u_{\bullet,2}^2)  \,\mathrm{I}_N 
   \end{pmatrix} 
   \\
  + 2\sigma\rho_\bullet  \,\mathrm{I}_{2N} + 
   \frac{2\sigma}{N} \begin{pmatrix}
   u_{\bullet,1}^2  \,\mathrm{K}_N & u_{\bullet,1}u_{\bullet,2}\, \,\mathrm{K} _N\\[1ex]
   u_{\bullet,1}u_{\bullet,2}\, \,\mathrm{K} _N & u_{\bullet,2}^2  \,\mathrm{K} _N
   \end{pmatrix} 
\end{multline*}
}%
where $\mathrm{M}$ is  given by \eqref{eq:1010}, $\mathrm{K}_N$ denotes the $N\times N$ matrix with all  coefficients  $1$. 

By a pre-multiplication by the matrix $\mathrm{P} ^{-1}$    and a post-multiplication by the matrix  $\mathrm{P}$  where 
\[
\mathrm{P}  = \begin{pmatrix}
u_{\bullet,1}\,  \mathrm{I}_N & -u_{\bullet,2}\,  \mathrm{I}_N \\
u_{\bullet,2}\,  \mathrm{I}_N & u_{\bullet,1}\,  \mathrm{I}_N
\end{pmatrix}
\]
we obtain that  $  \partial_UG_h(U_\bullet,\lambda) $  is similar to the matrix
\begin{eqnarray}
\label{eq:MN+JN+KN}
   \mathrm{M}+ \rho_\bullet \begin{pmatrix}
   3  \,\mathrm{I}_N & 0 \\ 0 &\mathrm{I}_N
\end{pmatrix} + 
\rho_\bullet\begin{pmatrix}
   2\sigma \,\mathrm{I}_N & 0 \\ 0 & 2\sigma  \,\mathrm{I}_N
\end{pmatrix} 
\nonumber\\+
   \frac{2\sigma}{N} \begin{pmatrix}
   \rho_\bullet\, \mathrm{K}_N & 0\\[1ex]
   0 & 0
   \end{pmatrix} \,.
\end{eqnarray}
Note that  \eqref{eq:MN+JN+KN} is the finite difference discretization of~\eqref{eq:M+J+K}, combined with trapezoidal quadrature formula.

The matrix $\mathrm{A}$ given by \eqref{eq:1025} can be easily  diagonalized  by noting that it is a circulant matrix with non-zero coefficients $c_0=2$, $c_1=-1$, and $c_{N-1}=-1$.
We  deduce that its eigenvalues are $\lambda_k=\sum_{j=0}^{N-1}c_j e^{ij\frac{2k\pi}{N}}= 2(1-\cos\frac{2k\pi}{N})$ for $k=0,\ldots,N-1$, associated with eigenvectors $V_k=(1,e^{i\frac{2k\pi}{N}},\ldots,e^{i(N-1)\frac{2k\pi}{N}})^\top$. Noticing that $\lambda_k=\lambda_{N-k}$ for $k=1,\ldots,[\frac{N}{2}]$ and $V_{N-k}=\overline V_k$, we find the basis of real eigenvectors
$
   \Big(1,\cos\tfrac{2k\pi}{N},\ldots,\cos (N-1)\tfrac{2k\pi}{N}\Big)^\top
   $ 
   and
   $
   \Big(0,\sin\tfrac{2k\pi}{N},\ldots,\sin (N-1)\tfrac{2k\pi}{N}\Big)^\top
$
that correspond to  the eigenvectors $\cos k\theta$ and $\sin k\theta$ of $\cA$
evaluated at the grid nodes $\theta_n, n=0,\ldots, N-1$.

With this information at hand, we find that the matrix given in \eqref{eq:MN+JN+KN} is similar to the diagonal matrix with diagonal block given by  $(\widehat L_{N,k})_{k=0,\ldots,[\frac{N}{2}]}$ where
{\small
\begin{equation*}
\label{eq:LNk}
   \widehat L_{N,k} =  \begin{pmatrix}
   \frac{\beta}{2}\,\omega_{N,k} -\alpha 
   + \rho_\bullet(3+2\sigma) + 4\sigma\rho_\bullet\delta_{0k} 
   & - 1 \\[1ex]
    1 & \hspace*{-15mm}  \frac{\beta}{2}\,\omega_{N,k}  - \alpha + \rho_\bullet(1 + 2\sigma)
   \end{pmatrix}
\end{equation*}
}%
and $\omega_{N,k}:=\frac{2}{h^2}\,(1-\cos\frac{2k\pi}{N}) = \frac{2}{h^2}\,(1-\cos kh)$. 
We deduce that necessary conditions for a bifurcation point at $U_\bullet$ are given by a modification of \eqref{eq:disp} in which $k^2$ is replaced by $\omega_{N,k}$ and $k\le\frac{N}{2}$.
Note that when the step-size $h$ tends to $0$,  we retrieve from the discrete dispersion relation the continuous dispersion relation~\eqref{eq:disp}.

%=======================================
\bibliographystyle{siamplain}
\bibliography{biblio}

\begin{thebibliography}{10}

\bibitem{Allgower}
{\sc E.~L. Allgower and K.~Georg}, {\em Introduction to numerical continuation
  methods}, Society for Industrial and Applied Mathematics, 2003,
  \url{https://doi.org/10.1137/1.9780898719154}.

\bibitem{Balac:22}
{\sc S.~Balac}, {\em Modeling optical {K}err effect in {Fabry-Perot}
  resonators}, tech. report, {IRMAR}, University of Renness, France, HAL
  04481700, \url{https://hal.science/hal-04481700}, 2022.

\bibitem{Balac:22a}
{\sc S.~Balac}, {\em {S3F4LLE}: a {Matlab} solver for the {Lugiato-Lefeve}r
  equation in the dynamic regime}, tech. report, {IRMAR}, University of Rennes,
  France, HAL 04662404, \url{https://hal.science/ hal-04662404}, 2022.

\bibitem{Balac:23}
{\sc S.~Balac}, {\em {COLLE} : a {Matlab} toolbox for solving the
  {Lugiato-Lefever} equation in the stationary regime}, tech. report, {IRMAR},
  University of Rennes, France, HAL 04318809, \url{https://hal.science/
  hal-04318809}, 2023.

\bibitem{Chembo:16}
{\sc Y.~K. Chembo}, {\em {K}err optical frequency combs: theory, applications
  and perspectives}, Nanophotonics, 5 (2016), pp.~214--230,
  \url{https://doi.org/doi:10.1515/nanoph-2016-0013},
  \url{10.1515/nanoph-2016-0013}.

\bibitem{Chembo:13}
{\sc Y.~K. Chembo and C.~R. Menyuk}, {\em Spatiotemporal lugiato-lefever
  formalism for {K}err-comb generation in whispering-gallery-mode resonators},
  Phys. Rev. A, 87 (2013), p.~053852,
  \url{https://doi.org/10.1103/PhysRevA.87.053852}.

\bibitem{Cole:18}
{\sc D.~C. Cole, A.~Gatti, S.~B. Papp, F.~Prati, and L.~Lugiato}, {\em Theory
  of {K}err frequency combs in {F}abry-{P}erot resonators}, Phys. Rev. A, 98
  (2018), p.~013831, \url{https://doi.org/10.1103/PhysRevA.98.013831}.

\bibitem{Crandall:71}
{\sc M.~G. Crandall and P.~H. Rabinowitz}, {\em Bifurcation from simple
  eigenvalues}, Journal of Functional Analysis, 8 (1971), pp.~321--340,
  \url{https://doi.org/10.1016/0022-1236(71)90015-2}.

\bibitem{Dauge:24}
{\sc M.~Dauge, S.~Balac, G.~Caloz, and Z.~Moitier}, {\em Whispering gallery
  modes and frequency combs: Two excursions in the world of photonic
  resonators}, in {Book of Abstracts, The 16th International Conference on
  Mathematical and Numerical Aspects of Wave Propagation (WAVES 2024)},
  L.~Gizon, ed., Berlin, Germany, June 2024,
  \url{https://doi.org/10.17617/3.MBE4AA}.

\bibitem{Delcey:18}
{\sc L.~Delcey and M.~Haragus}, {\em Periodic waves of the {Lugiato–Lefever}
  equation at the onset of {T}uring instability}, Phil. Trans. R. Soc. A.,
  (2018), p.~20170188,
  \url{https://doi.org/http://doi.org/10.1098/rsta.2017.0188}.

\bibitem{Ghidaglia:88}
{\sc J.-M. Ghidaglia}, {\em Finite dimensional behavior for weakly damped
  driven {S}chrödinger equations}, Annales de l'I.H.P. Analyse non linéaire,
  5 (1988), pp.~365--405, \url{https://doi.org/10.1016/S0294-1449(16)30343-2}.

\bibitem{Godey:17}
{\sc C.~Godey}, {\em A bifurcation analysis for the {Lugiato-Lefever}
  equation}, The European Physical Journal D, 71 (2017),
  \url{https://doi.org/10.1140/epjd/e2017-80057-2}.

\bibitem{Godey:14}
{\sc C.~Godey, I.~V. Balakireva, A.~Coillet, and Y.~K. Chembo}, {\em Stability
  analysis of the spatiotemporal lugiato-lefever model for {K}err optical
  frequency combs in the anomalous and normal dispersion regimes}, Phys. Rev.
  A, 89 (2014), p.~063814, \url{https://doi.org/10.1103/PhysRevA.89.063814}.

\bibitem{Jahnke:17}
{\sc T.~Jahnke, M.~Mikl, and R.~Schnaubelt}, {\em Strang splitting for a
  semilinear {S}chrödinger equation with damping and forcing}, Journal of
  Mathematical Analysis and Applications, 455 (2017), pp.~1051--1071,
  \url{https://doi.org/10.1016/j.jmaa.2017.06.004}.

\bibitem{Jang:15}
{\sc J.~K. Jang, M.~Erkintalo, S.~G. Murdoch, and S.~Coen}, {\em Writing and
  erasing of temporal cavity {S}olitons by direct phase modulation of the
  cavity driving field}, Opt. Lett., 40 (2015), pp.~4755--4758,
  \url{https://doi.org/10.1364/OL.40.004755}.

\bibitem{Lugiato:87}
{\sc L.~A. Lugiato and R.~Lefever}, {\em Spatial dissipative structures in
  passive optical systems}, Phys. Rev. Lett., 58 (1987), pp.~2209--2211,
  \url{https://doi.org/10.1103/PhysRevLett.58.2209}.

\bibitem{Mandel:17}
{\sc R.~Mandel and W.~Reichel}, {\em A priori bounds and global bifurcation
  results for frequency combs modeled by the {L}ugiato–{L}efever equation},
  SIAM Journal on Applied Mathematics, 77 (2017), pp.~315--345,
  \url{https://doi.org/10.1137/16M1066221}.

\bibitem{Matsko:11}
{\sc A.~B. Matsko, A.~A. Savchenkov, W.~Liang, V.~S. Ilchenko, D.~Seidel, and
  L.~Maleki}, {\em Mode-locked {K}err frequency combs}, Opt. Lett., 36 (2011),
  pp.~2845--2847, \url{https://doi.org/10.1364/OL.36.002845}.

\bibitem{Miyaji:10}
{\sc T.~Miyaji, I.~Ohnishi, and Y.~Tsutsumi}, {\em Bifurcation analysis to the
  {Lugiato–Lefever} equation in one space dimension}, Physica D: Nonlinear
  Phenomena, 239 (2010), pp.~2066--2083,
  \url{https://doi.org/10.1016/j.physd.2010.07.014}.

\bibitem{Obrzud:17}
{\sc E.~Obrzud, S.~Lecomte, and T.~Herr}, {\em Temporal solitons in
  microresonators driven by optical pulses}, Nature Photonics, 11 (2017),
  p.~600–607, \url{https://doi.org/10.1038/nphoton.2017.140}.

\bibitem{Shampine:01}
{\sc L.~Shampine and J.~Kierzenka}, {\em A {BVP} solver based on residual
  control and the {M}atlab {PSE}}, ACM Trans. Math. Softw., 27 (2001),
  p.~299–316, \url{https://doi.org/10.1145/502800.502801}.

\bibitem{Trefethen:14}
{\sc L.~N. Trefethen and J.~A.~C. Weideman}, {\em The exponentially convergent
  trapezoidal rule}, SIAM Review, 56 (2014), pp.~385--458,
  \url{https://doi.org/10.1137/130932132}.

\bibitem{Uecker:21}
{\sc H.~Uecker}, {\em Numerical continuation and bifurcation in nonlinear
  PDEs}, Society for Industrial and Applied Mathematics, Philadelphia, PA,
  2021, \url{https://doi.org/10.1137/1.9781611976618}.

\bibitem{Ziani:23}
{\sc Z.~Ziani, T.~Bunel, A.~M. Perego, A.~Mussot, and M.~Conforti}, {\em Theory
  of modulation instability in {Kerr Fabry-Perot} resonators beyond the
  mean-field limit}, Phys. Rev. A, 109 (2024), p.~013507,
  \url{https://doi.org/10.1103/PhysRevA.109.013507}.

\end{thebibliography}

\end{document}